\theoremstyle{plain}
\newtheorem{theorem}{Theorem}[section] %[section] % to omit decimals
\newtheorem{lem}[theorem]{Lemma}
\newtheorem{prop}[theorem]{Proposition}
\newtheorem{ass}{Assumption}[section]
\theoremstyle{remark}
\newtheorem{rem}[theorem]{Remark}
\theoremstyle{definition}
\newcommand{\R}{\mathbb{R}}
\newcommand{\C}{\mathbb{C}}
\newcommand{\Z}{\mathbb{Z}}
\newcommand{\Bc}{\mathcal{B}}
\newcommand{\Ic}{\mathcal{I}}
\newcommand{\Mc}{\mathcal{M}}
\newcommand{\Sc}{\mathcal{S}}
\newcommand{\Sf}{\mathfrak{S}}
\newcommand{\sgn}{\operatorname{sgn}}
\newcommand{\Span}{\operatorname{Span}}
\newcommand{\diag}{\operatorname{diag}}
\newcommand{\mat}[4]{%\arraycolsep=6pt\def\arraystretch{1.2}
\left[\begin{array}{cc}
#1 & #2 \\ #3 & #4
\end{array}\right]}
\newcommand{\smat}[4]{%\arraycolsep=6pt\def\arraystretch{1.2}
\left[\begin{smallmatrix}
#1 & #2 \\ #3 & #4
\end{smallmatrix}\right]}
\newcommand{\vmat}[2]{%\arraycolsep=6pt\def\arraystretch{1.2}
\left[\begin{array}{cc}
#1 \\ #2
\end{array}\right]}
\newcommand{\abs}[1]{\lvert{#1}\rvert}
\newcommand{\ds}{\displaystyle} % Use >{\ds} in arrays as follows \begin{array}{r c >{\ds} l}
\newcommand{\itemspacing}{
\setlength\topsep{0.1in}
\setlength\itemsep{0.1in}}
\renewcommand{\vec}{\bm}
\renewcommand{\Sf}{\Bc}
\title{The Transfer of Polarised Radiation in Homogeneous Water Bodies}
\author{Andrew Corbett}
\newcommand{\Date}{11$^{\mathrm{th}}$ November 2019}
\date{\Date}
\begin{document}

\maketitle

\begin{abstract}
We give an analytic solution for the propagation of polarised radiation through a homogeneous water body. As corollaries we derive the vector bidirectional reflectance distribution function at the bottom of an infinitely deep water body and compute the asymptotic radiance distribution. These have applications to polarised radiative transfer simulation in the inhomogeneous setting. Of independent interest, we give a concise variant formulation of the azimuthal decoupling via a complex Fourier transform which is applicable to radiative transfer models in general.

%We give derive the Vector Bidirectional Reflectance Distribution Function for an infinitely deep layer of homogeneous, source-free water. This completes the procedure set out in \citep{mobley-VRTE} for solving the vector radiative transfer equation in a deep body of water, offering significantly improved computational speed over assuming a (diffuse scattering) Lambertian boundary condition at a large depth.
\end{abstract}

%\setcounter{secnumdepth}{4}
%\setcounter{tocdepth}{1}
%\tableofcontents

%\section{Polarised radiative transfer in homogeneous water: an analytic solution and applications}
\section{Introduction}
\label{sec:introduction}

The simulation of light propagation within a natural water body has long been a topic of great maritime and oceanographic importance. The problem is characterised by an equation of radiative transfer governing the propagation of light intensity
% the given Radiative Transfer Equation (RTE) governing the propagation of light
% radiation intensity propagation
% the available solution procedures for
% of a Radiative Transfer Equation (RTE) governing 
% the propagation of light 
in such a water body \citep{chandrasekhar, preisendorfer, mobley-LAW} which may be solved via various numerical methods \citep{mobley-comp}.
More recently, the transfer of \textit{polarised} radiation has been approached with the same arsenal of solution techniques \citep{chami, tynes, mish-02, emde, zhai-09, mobley-VRTE}, generalising the preceding `scalar' theory by solving for the Stokes vector in the Vector Radiative Transfer Equation (VRTE).
%; see \S \ref{sec:VRTE}.

In this work we give an analytic solution for the propagation of polarised radiation through an \textit{homogeneous} water body, one in which the inherent optical properties, such as the scattering behaviour, are independent of depth. This is an idealistic but fundamental prototype scenario in which we can derive an implementable (see \S \ref{sec:implementation}) analytic solution to the VRTE. The solution is indicative of the general case, giving structural insight to the propagation of polarised radiation. We also derive applications to the numerical simulation of polarised radiative transfer in a general inhomogeneous water body. Namely, we deduce the following:

\begin{enumerate}[(i)]
%\itemspacing

\item The Vector Bidirectional Reflectance Distribution Function (VBRDF) at a finite depth in an infinitely deep natural water body.

\item
The asymptotic radiance distribution for the (polarised) Stokes vector at depths approaching infinity.

\item
A compact construction for the Fourier decoupling of a discretely approximated azimuthal distribution, applicable more widely to various radiative transfer simulations in inhomogeneous water.
\end{enumerate}

Originally motivated by (i), the VBRDF of an infinitely deep water body (see \S \ref{sec:intro-bottom} and \S \ref{sec:infinite}) completes the set of boundary conditions for a full inhomogeneous vector radiative transfer model as set out by \cite{mobley-VRTE}.
We derive an efficient algorithm to execute its computation, superseding the existing approach of solving inhomogeneous equations to an arbitrarily large depth \citep[\S 1.2.6.2]{mobley-VRTE}.

However, our analysis may be applied more widely.
Simple water bodies may indeed be homogeneous over vast regions. The theoretical framework we provide below describes the solution to the VRTE in such water.
We provide details on implementing this solution numerically in \S \ref{sec:implementation}.
In future work we hope to extend this to apply to more general water bodies by viewing them as comprising of a collection of homogeneous areas (see \S \ref{sec:homog-patching}). Thus giving means by which to analyse data for polarised radiation.
We also note that the `vector' theory developed here contains the scalar theory as a special case: the radiance magnitude is given by the first component of the Stokes vector \eqref{eq:stokes-vector}.

The scattering of light in the scalar case is described by the `volume scattering function' \citep[Ch.\ 3]{mobley-LAW} for which there is well-studied and standardised experimental data \citep{petzold, fournier-forand}. For polarised light, such a data standard is simply not recorded in the literature. Our present work thus sits as a theoretical guide, in keeping with the principles of \cite{mobley-VRTE}. The collation of data regarding polarised scatting in water bodies is an important topic of research for the testing and calibration of numerical models such as our own.

%Both results (i) and (ii) above show that, asymptotically, the components of polarised radiation behave uniformly; they exhibit the same limiting behaviour as in the scalar case \pcite[Ch.\ 9]{mobley-LAW}. A distinguishing feature of the vector case is how the structure of the scattering matrices relates directly to the complexity of the solutions derived herein.

We also give a new formulation of the discrete spectral VRTE which may be applied to discretely approximated radiative transfer methods in general; see \S \ref{sec:spectral-VRTE-main}.
The theoretical content of our derivation and that of \cite{mobley-VRTE} is conserved, however, by use of the complex valued discrete Fourier transform, our construction is functionally more concise.
Moreover, when applied to the vector case, as it is here, the mathematical formulation is more intuitive. Indeed, in the scalar case the volume scattering function depends only on the cosine of the difference of azimuthal coordinates, $\cos(\phi-\phi')$, say. Whereas for polarised radiation, the scattering phase matrix depends more generally on $\phi-\phi'$ itself (see \S \ref{sec:phase-matrix}) and thus its Fourier expansion does not depend solely on a cosine series.
Introducing a complex Fourier series with respect to the character $$x\mapsto e^{2\pi i x} = \cos(x) + i\sin(x),$$
we are able to neatly address the shift in scattering planes as well as homogenise anomalies in the overall derivation.
%For instance, the usual sine and cosine double angle formulae reduce to the fact that $e^{2\pi i (x+y)} =e^{2\pi i x} e^{2\pi i y} $ for $x,y\in\R$.
Many software packages operate fast parallel methods for complex multiplication and addition meaning that our method has an advantage in terms of computational implementation.
%With this in mind, we promote our formulation of the discrete spectral VRTE (Proposition \ref{prop:VRTE-spec}) for its computational and notational speed. It may be used in a general setting for the decoupling of the azimuthal variable.

In the remainder of this introduction we shall discuss the key features of our results, stating them explicitly with proof to follow in the subsequent sections.

\subsection{An analytic solution for the Stokes vector}
\label{sec:intro-analytic-sol}

Polarised light is simulated by the \textit{incoherent\footnote{The Stokes vector is called incoherent when defined in terms of radiance values; that is, with respect to infinitesimal increments in direction of its defining beam.} Stokes vector}
%\footnote{The incoherence here refers to the infinitesimal increment in direction with respect to which the Stokes vector is defined.}
$S$ consisting of four components, each evaluating the radiance
%\footnote{Radiance is the power delivered over infinitesimal increments in surface area, bandwidth and, characteristically, solid angle of incidence (c.f.\ incoherence) measured in W m$^{-2}$ nm$^{-1}$ sr$^{-1}$.}
at a point, delivered in different polarisation states for a beam of radiation of fixed wavelength; see \S \ref{sec:stokes-vector}. If $S$ is considered at a positive depth $z$ in a plane-parallel natural water body then it satisfies a VRTE, a first order integro-differential equation, as given in \eqref{eq:VRTE-gen}.%; see \citep[\S 1.2.3]{mobley-VRTE}.

Radiance itself is the measure of radiant power over infinitesimal increments in surface area, bandwidth and, characteristically, solid angle of incidence
with units of W m$^{-2}$ nm$^{-1}$ sr$^{-1}$. Its distribution over the spherical directions emanating from the given point at $z$ is parameterise by a polar and azimuthal coordinate, as detailed in \S \ref{sec:geometric-assuptions}. It is common practice to disassemble the azimuthal dependence of the VRTE via a discrete Fourier decoupling trick. We give a novel concise exposition of this in \S \ref{sec:deriving-spectral}. The upshot being that the analysis of $S$ is shifted to that of its complex Fourier coefficients $\hat{S}$ which satisfy independent VRTEs, decoupled in the azimuthal variable. We consider the polar variable discretely spaced over $2M$ subintervals, $M$ in each hemisphere, and suppose that the components of the VRTE are constant on these intervals. For example, by replacing them with their quad-averages \citep[\S 1.3.2]{mobley-VRTE}. The continuous VRTE in \eqref{eq:VRTE-gen} is then reduced to solving a system of $8M$ linear equations, $4$ components in each Stokes vector by $2M$ polar coordinates:
\begin{equation}
\label{eq:VRTE-intro}
\frac{d\hat{S}}{dz} = K(z) \hat{S}(z) + \hat{\sigma}(z)
\end{equation}
where $K(z)$ is an $8M\times 8M$ `transfer' matrix describing the elastic scattering behaviour of the water body and $\hat{\sigma}(z)$ a vector of $8M$ components corresponding to the (Fourier transform of the) external source term. The derivation of \eqref{eq:VRTE-intro} is given in \S \ref{sec:spectral-VRTE-main}.

\subsubsection{The general case}

Analytically, the control process in \eqref{eq:VRTE-intro} possesses a general solution of the form
\begin{equation}
\label{eq:intro-analytic-sol}
\hat{S}(z) = \Phi(z,z_0)\hat{S}(z_0) + \int_{z_0}^{z}\Phi(z,z')\hat{\sigma}(z')dz'.
\end{equation}
where $\Phi(z,z_0)$ denotes the state-transition matrix between depths $z$ and $z_0$.
In an inhomogeneous natural body of water with a two-point boundary condition, one at the surface and one at the bottom, $\Phi(z,z_0)$ is difficult to determine. One most efficiently solves \eqref{eq:VRTE-intro} via a numerical integration algorithm; for example, by first applying an `invariant imbedding identity' (see \eqref{eq:invariant-imbedding-reln}) to split the problem into two one-point (initial value) problems \citep{preisendorfer, mobley-LAW, mobley-VRTE} which may be integrated numerically.
Other approaches include the method of discrete ordinates \citep{stamnes-88, evans, emde} or considering successive orders of scattering \citep{chami, zhai-09, zhai-15, zhai-17}.

\subsubsection{Homogeneous water}

When the inherent optical properties of a water body are depth invariant the analytic solution of \eqref{eq:VRTE-intro} becomes tangible for a computer to evaluate.
This assumption amounts to that $K=K(z)$ and $\hat{\sigma}=\hat{\sigma}(z)$ are both constant functions $z$. On physical grounds \citep[p.\ 451]{mobley-LAW}, the eigenvalues of $K$ are distinct, meaning that $K$ is diagonalisable of the form $K = E D E^{-1}$ where $D$ is a diagonal matrix of eigenvalues and the rows of $E$ are the eigenvectors. In this case, the state-transition matrix is explicitly given by
\begin{equation}
\label{eq:intro-state-mat}
\Phi(z,z_0)=E\exp(D(z-z_0))E^{-1}.
\end{equation}
Then the analytic solution \eqref{eq:intro-analytic-sol} is solved up to the computation of $E$ and $D$. This may be solved by a numerical algorithm with computational time typically proportional to the cube of the order of $K$, in our case $8M$.
By hemispherical symmetry arguments, the matrix of eigenvectors is of the symmetric block form
\begin{equation}
\label{eq:intro-eigenvalue}
E=\begin{bmatrix}
E^{+}&E^{-}\\
E^{-}&E^{+}
\end{bmatrix}.
\end{equation}
This is derived in \S \ref{sec:hemi-sym} and may be compared to the scalar case \citep[\S 9.4]{mobley-LAW}. With this decomposition, we reduce the order $8M$ problem to one of order $4M$. Probing further, in \S \ref{sec:phase-sym} we deconstruct the polarised scattering matrices to, in certain cases, reduce the problem to one of order $3M$.

\subsection{Bottom reflectance in a deep natural water body}
\label{sec:intro-bottom}

Deep waters are time consuming to model numerically. However, it is common that after a certain depth there is a large homogeneous region of source-free water. We analytically compute the VBRDF of such a region in \S \ref{sec:infinite}. As a result, modelling an inhomogeneous needs only to take place down to a shallow depth above this region. Algorithmically, this is a great improvement over assuming a (diffusely scattering) Lambertian boundary condition at a very large depth.

The first and motivating corollary to the result described in \S \ref{sec:intro-analytic-sol} is the derivation of an explicit formula for the VBRDF in an infinitely deep water body. This is given by the reflectance $R(z_0,\infty)$ at some depth $z_0$, past which the water is assumed infinitely deep, source free and homogeneous (see \S \ref{sec:analytic-solution-reflectance} for a formal definition). Our result shows that polarised radiation behaves similarly in each component to the scalar case. In \S \ref{sec:infinite} we prove that
\begin{equation*}
R(z_0,\infty) = E^{+}(E^{-})^{-1}
\end{equation*}
so that the reflectance is known up to the determination of the eigenvalues of $K$.

\subsection{Asymptotic radiance distributions}

An interesting observation of radiative transfer in very deep waters is that the depth dependence decouples completely from the directional distribution. This manifests the exponential decay of radiance in all directions. In the vector case, this observation is also true for each component defining the various polarisation states. In \S \ref{sec:asymp} we give a mathematical description of this phenomenon,  showing in \eqref{eq:asymptotic distribution} that the asymptotic Stokes vector decays exponentially with respect to its smallest eigenvalue uniformly in each component.

\subsection{Implementation}
\label{sec:implementation}

Let us here give a short manual of instruction for the implementation of our solution in a homogeneous water body.

\subsubsection*{Step 1}
Identify the depth-independent phase scattering matrix $P$ in \eqref{eq:phase-matrix} and the source term $\sigma$ of the water body and take their Fourier transforms as in \eqref{sec:deriving-spectral}. Compute the local transmittance $\tau$ and reflection $\rho$ matrices as in \S \ref{sec:disc-notation}.

\subsubsection*{Step 2}
Compute the eigenvectors $\tilde{E}$ and eigenvalues $\tilde{D}$ of the auxiliary $4M$-dimensional system $(\tau-\rho)(\tau+\rho)$ as in \S \ref{sec:hemi-sym}. This is achieved via a numerical procedure. If the scattering matrix $P$ is of part-diagonal type (see \S \ref{sec:part-diagonal-type}) then this may be reduced to a $3M$-dimensional and an $M$-dimensional system.

\subsubsection*{Step 3}
Compute $D^{+}=\sqrt{\tilde{D}}$ and the differences $E^{+} - E^{-}= \tilde{E}$ and $E^{+}+E^{-}$ via \eqref{eq:sim-equations}. Then $D=\diag(D^{+},-D^{+})$ and $E^{+} = \frac{1}{2}(E^{+}+E^{-}+\tilde{E})$ and $E^{-} = \frac{1}{2}(E^{+}+E^{-}-\tilde{E})$ determine $E$ as in \eqref{eq:intro-eigenvalue}. With $E$ and $D$ evaluated, the explicit solution for the Fourier transform $\hat{S}$ is obtained via \eqref{eq:intro-analytic-sol} and \eqref{eq:intro-state-mat}. To invert the Fourier transform and arrive at the final solution apply \eqref{eq:fourier-inversion}.

\subsection{Homogeneous patching}
\label{sec:homog-patching}

Of course, not all natural waters may be assumed homogeneous. At the gain of computational speed, we speculate on simulating inhomogeneous waters roughly as a concatenation of homogeneous sections. Such homogeneous sections would then be `patched' together via the use of an invariant imbedding relation (for example, see \eqref{eq:invariant-imbedding-reln}).
The computational efficiency would then scale linearly with the number of sections considered. In a future work we plan to investigate such an implementation as a realistic alternative natural water body simulation.

\subsection{Common notation}
If $A$ is an $m\times n$ array we let $A^{T}$ denote its transpose, the $N\times m$ array in which the $i$-th row of $A$ is the $i$-th column of $A^{T}$. We extend the exponential map $\exp(x):=e^{x}$ more generally to $A$ via the Taylor expansion
\begin{equation*}
\exp(A):=\sum_{k\geq 0} \frac{A^k}{k!}.
\end{equation*}
In particular, we introduce a special notation to describe the complex unit circle, $e(x):=\exp(2\pi i x) = \cos(x) + i \sin(x).$
We use the Kronecker delta indicator
\begin{equation*}
\delta_{x,y} := \begin{cases}	1 & \text{if } x=y\\ 0 & \text{if } x\neq y\end{cases}
\end{equation*}
to write the $n\times n$ identity matrix as $\bm{1}_{n}:=[\delta_{i,j}]_{1\leq i,j\leq n}$ and a general block diagonal matrix as $\diag(A_1,\ldots,A_n):=[\delta_{i,j}A_i]_{1\leq i,j\leq n}$ for coefficient matrices $\{A_i\}_{1\leq i\leq n}$.

\section{The transfer of polarised radiation in a natural water body}
\label{sec:VRTE}

We begin with some background on the Stokes vector and its governing VRTE, the definitions of which depend sensitively on their geometric structure.

\subsection{Geometric assumptions}
\label{sec:geometric-assuptions}

\subsubsection{Depth scales and Cartesian geometry}
\label{sec:cartesian-basis}

Fix the plane $\Sc$ defined by the mean surface of the water body and on $\Sc$ choose the depth to be $z=0$. From here, choose the $\vec{z}$-axis to point downwards so that $z>0$ increases positively with depth into the water.
Pick linearly independent unit vectors $\vec{x}$ and $\vec{y}$ to span $\Sc$ as follows: by convention, take $\vec{x}$ to point downwind and choose $\vec{y}$ to satisfy the right-hand system $\vec{z}=\vec{x}\times\vec{y}$. 

We assume any water body to be `plane-parallel' which is to say that light propagation is invariant within any plane parallel to $\Sc$ at a fixed positive depth. We model radiative transfer by the propagation through an infinitesimally thin column of water in the $\vec{z}$-direction.

\subsubsection{Spherical coordinates}
\label{sec:spherical-coords}

The possible directions into which radiation may be emitted are contained in the unit sphere $\Sf$. We parameterise $\Sf$ by the coordinates
% unit vectors $\xi=\xi(\theta,\phi)\in\R^{3}$ for which the \textit{azimuthal} angular coordinate $\phi$ rotates in the plane $\Sc$, clockwise for the observer looking in the $\vec{z}$ direction with $\phi=0$ pointing `downwind'; the angle $\theta$ is defined such that the $\vec{z}$-component of $\xi(\theta,\phi)$ is given by $\cos(\theta).$ Explicitly,
\begin{equation*}
\Sf=\{ (\mu,\phi)\, \vert\,  \mu\in [-1,1],\, \phi\in[0,2\pi)\}
\end{equation*}
where $\mu$ is the \textit{polar} coordinate, the projection of the direction onto the $\vec{z}$-axis, and $\phi$ is the \textit{azimuth} which rotates clockwise in a plane parallel to $\Sc$ when looking in the positive $\vec{z}$-direction and is fixed such that $\phi = 0$ points along $\vec{x}$, downwind.
% Alternatively, we could swap $\mu$ for $\theta\in[0,\pi]$, the angle measured from $\vec{z}$ given by $\mu = \cos(\theta)$. 
Explicitly, an arbitrary direction in $\Sf$ centred at $(0,0,0)$ is given by the unit vector
\begin{equation}
\label{eq:rhat}
\vec{r} =\vec{r}(\mu,\phi) :=  \sin(\arccos(\mu))\cos(\phi)\vec{x}+\sin(\arccos \mu)\sin(\phi)\vec{y}+\mu\vec{z}.
\end{equation}
%\begin{equation*}
%\hat{r} = \hat{r}(\theta,\phi) =  \sin(\theta)\cos(\phi)\vec{x}+\sin(\theta)\sin(\phi)\vec{y}+\cos(\theta)\vec{z}.
%\end{equation*}
%Pointing downward along the $z$-axis we have $\theta=0$ and $\mu=1$.
We distinguish between the `downward' hemisphere $\Sf^{+} := \{(\mu,\phi)\in \Sf \,\vert \,\mu\geq 0\}$ and the `upward' hemisphere $\Sf^{-} := \{ (\mu,\phi)\in \Sf \,\vert \,\mu\leq 0\}$ so that
$\Sf=\Sf^{+}\cup \Sf^{-}.$

In addition to $\{\vec{x},\vec{y},\vec{z}\}$, we consider a second orthonormal basis of $\R^{3}$ for each direction $(\mu,\phi)\in\Sf$. Let $\vec{\varphi}=\vec{\varphi}(\mu,\phi):= -\sin(\phi) \vec{x} + \cos(\phi) \vec{y}$, a unit vector parallel to $\Sc$, and choose the unit vector $\vec{\nu}=\vec{\nu}(\mu,\phi)$ satisfying $\vec{r} = \vec{\nu} \times \vec{\varphi}$ with $\vec{r}$ defined as in \eqref{eq:rhat}; explicitly, $\vec{\nu}(\mu,\phi) := \mu\cos(\phi) \vec{x} + \mu\sin(\phi) \vec{y} - \sin(\arccos\mu)\vec{z}$.
The basis $\{\vec{r},\vec{\nu},\vec{\varphi}\}$ is then better orientated to determine the polarisation states of a beam propagating along $\vec{r}$.
We call the plane $\Mc(\mu,\phi) := \Span_{\R}(\vec{r},\vec{\nu})$ the \textit{meridian plane} with respect to the direction $(\mu,\phi)\in\Sf$.

\subsection{The Stokes Vector}
\label{sec:stokes-vector}

Consider a beam of radiation with point-wise propagation direction $(\mu,\phi)\in\Sf$. Its associated electric field solution $E$ oscillates orthogonally to its propagation direction in the plane spanned by $\vec{\nu}$ and $\vec{\varphi}$. So at a fixed time and position we have $E= E_{\nu}\vec{\nu} + E_{\varphi}\vec{\varphi}$ with $E_{\nu},E_{\varphi}\in\C$. By convention, these two orthogonal components determine two linear polarisation states; these are termed the `vertical' ($E_{\varphi}=0$) and `horizontal' ($E_{\nu}=0$) polarisations states\footnote{Only in a scattering event do we refer to $\vec{s}$-polarised and $\vec{p}$-polarised light with respect to the scattering plane (of incidence). Such a coordinate transformation is described in \S \ref{sec:phase-matrix}.}
\citep[p.\ 16]{mish-02}.
We define the four-component (coherent) \textit{Stokes vector}
\begin{equation}
\label{eq:stokes-vector}
S:=
\frac{1}{2\eta}
\left[
{\arraycolsep=1.4pt\def\arraystretch{1.1}
\begin{array}{rcl}
E_{\nu}\bar{E}_{\nu}&+&E_{\varphi}\bar{E}_{\varphi}\\
E_{\nu}\bar{E}_{\nu}&-&E_{\varphi}\bar{E}_{\varphi}\\
-(E_{\varphi}\bar{E}_{\nu}&+&E_{\nu}\bar{E}_{\varphi})\\
i(E_{\varphi}\bar{E}_{\nu}&-&E_{\varphi}\bar{E}_{\nu})
\end{array}
}
\right],
\end{equation}
%where $\eta=\sqrt{\mu/\epsilon}$ is the characteristic impedance of the medium, $\epsilon$ its permittivity and $\mu$ its permeability. (Recall that the speed of light in the given medium is given by $1/\sqrt{\mu\epsilon}$.)
where $\eta$ is the characteristic impedance of the medium, 
to describe the intensity (or rather irradiance) delivered in each polarisation state. However, in this work we shall exclusively consider the \textit{incoherent} Stokes vector by defining the direction of $E$ only up to incremental amounts via the differential $d\mu\times d\phi$ on $\Sf$. The components of incoherent Stokes vector are then values of radiance. In particular, the first component is precisely the scalar (magnitude) radiance whereas the others measure the linearity and helicity of the radiance.
For example, linearly polarised light propagates in proportion to
$[1,\pm 1,0,0]^{T}$.

\subsection{The continuous depth-dependant VRTE}
\label{sec:VRTE-gen}

Within a plane-parallel, temporally invariant water body, the propagation of monochromatic polarised radiation is described by the incoherent Stokes vector $S=S(z;\mu,\phi)$, as in \S \ref{sec:stokes-vector}, for depths $z>0$ as a distribution over $(\mu,\phi)\in\Sf$. The water body in which it propagates forces upon $S$ the following VRTE \citep[(1.5)]{mobley-VRTE}:
\begin{multline}
\label{eq:VRTE-gen}
\mu \frac{dS}{dz}(z;\mu,\phi) = 
- c(z) S(z;\mu,\phi)\\
+ \int_{0}^{2\pi}\int_{-1}^{1} P(z; (\mu',\phi'){\rightarrow}(\mu,\phi)) S(z;\mu',\phi') d\mu'd\phi' + \sigma(z;\mu,\phi)
\end{multline}
where $P$ denotes the $4\times 4$ elastic scattering phase matrix, describing the scattering of polarised light incident at $z$ in direction $(\mu',\phi')$ scattering into resultant beams of direction $(\mu,\phi)$; the term $c$ denotes the beam attenuation coefficient; and $\sigma$ denotes an arbitrary internal source term \citep[\S 1.2.3]{mobley-VRTE}. These terms are the Inherent Optical Properties (IOPs) of the system: $c$ and $P$ completely describe the absorbing and elastic scattering behaviour of the water body, respectively, whereas $\sigma$ accounts for inelastic scattering processes from other wavelengths (Ramen scattering or fluorescence) or bioluminescence.% We refer to \citep[\S 1.2.3]{mobley-VRTE} for their more detailed definitions.

\subsection{Coordinates for polarised scattering processes}
\label{sec:scattering-coords}

Consider a beam, incident at a point, in direction $(\mu',\phi')$ whose resultant scattered light is distributed over directions $(\mu,\phi)$. Denote the incident and resultant directions by $\vec{r}':=\vec{r}(\mu',\phi')$ and $\vec{r}:=\vec{r}(\mu,\phi)$ respectively, as in \eqref{eq:rhat}. Moreover, following \S \ref{sec:spherical-coords}, write $\vec{\nu}':=\vec{\nu}(\mu',\phi')$, $\vec{\varphi}':=\vec{\varphi}(\mu',\phi')$, $\vec{\nu}:=\vec{\nu}(\mu,\phi)$, and $\vec{\varphi}:=\vec{\varphi}(\mu,\phi)$.

\subsubsection{The scattering plane}

We call the angle $\psi\in[0,\pi]$ between $\vec{r}'$ and $\vec{r}$ \textit{scattering angle}. 
By definition, $\psi$ satisfies
\begin{equation}
\label{eq:scattering-angle}
\cos(\psi) = \vec{r}(\mu',\phi')\cdot\vec{r}(\mu,\phi) = \mu'\mu + \sqrt{(1-\mu'^2)(1-\mu^2)} \cos(\phi-\phi').
\end{equation}
This angle is measured in the plane spanned by $\vec{r}'$ and $\vec{r}$, the \textit{scattering plane}. It is with respect to this plane that we define the operations of the scattering matrix $P$.

\subsubsection{Senkrecht and parallel axes}

Make a choice of a pair of axes $(\vec{s}',\vec{p}')$ and $(\vec{s},\vec{p})$ with components which are orthogonal (senkrecht), $\vec{s}',\vec{s}$ and parallel $\vec{p}',\vec{p}$ to the scattering plane such that $\vec{s}'\times \vec{p}'=\vec{r}'$ and $\vec{s}\times \vec{p}=\vec{r}$. This choice is then uniquely defined by fixing the sign of $\vec{p}'$ and $\vec{p}$ so that they are the first parallel vectors found after rotating $\vec{\nu}'$ and $\vec{\nu}$, respectively, in an anti-clockwise direction when looking into the beam by convention.\footnote{There is no conflict of terminology for the vertical and horizontal unit vectors between ours and that of \cite{mobley-VRTE}. In particular, we rotate the azimuth $\phi$ clockwise, not anti-clockwise $-\phi$. Under this mapping, $\vec{\nu}(\mu,\phi)=-\vec{\nu}(\mu,-\phi)$ is equal to the horizontal unit vector on \citep[p.\ 8]{mobley-VRTE}. We refer to \citep[p.\ 14]{mobley-VRTE} for a more elaborate discussion on conventions amongst various authors.}

\subsubsection{Affine transformations}

A scattering event described by $P$ concerning an incident direction $(\mu',\phi')$ and a resultant distribution $(\mu,\phi)$ is defined with respect to the axes $(\vec{s}',\vec{p}')$ and $(\vec{s},\vec{p})$ as given above. However, the polarisation states encoded in the Stokes vector are defined according to the horizontal and vertical coordinates with respect to their respective meridian planes (see \S \ref{sec:stokes-vector}).
To this end we consider an affine transformation to rotate these axes from the incident system, to the scattering plane, and then into the resultant system after scattering has occurred.

Let $\alpha',\alpha \in [0,2\pi)$ denote the angles between $(\vec{\nu}',\vec{\varphi}')$ and $(\vec{p}',\vec{s}')$; and between $(\vec{\nu},\vec{\varphi})$ and $(\vec{p},\vec{s})$, respectively. These may be deduced, at least up to sign, from the identities $\cos(\alpha') = \vec{\nu}'\cdot \vec{p}'$ and $\cos(\alpha) = \vec{\nu}\cdot \vec{p}$; they may be explicated further, as in \citep[(1.6)-(1.13)]{mobley-VRTE}, but we do not require such an expression in this work.
To execute the affine transformation, or $\gamma\in\R$ define the rotation matrix
\begin{equation}
\label{eq:rotation-matrix}
R(\gamma) =
\left[
{\arraycolsep=1.5pt\def\arraystretch{1.1}
\begin{array}{rrrr}
1	&					&					&	\\
	& \cos(2\gamma)		& -\sin(2\gamma)		&	\\
	& \sin(2\gamma)		& \cos(2\gamma)		&	\\
	&					&					& 1
\end{array}
}
\right].
\end{equation}
We record the following useful identity used later for computational efficiency.

\begin{lem}
\label{lem:coords-flip}
With respect to the directions $(\mu',\phi')$ and $(\mu,\phi)$ as above, consider the mapping given by $(\phi,\phi')\mapsto (\pi - \phi,\pi - \phi')$ and either (i) $(\mu,\mu')\mapsto (-\mu,\mu')$ or (ii) $(\mu,\mu')\mapsto (\mu,-\mu')$. Then the corresponding rotation angles respectively satisfy
\begin{enumerate}[(i)]
%\itemspacing
\item $(\alpha,\alpha')\mapsto(\alpha,\pi - \alpha')$ and $(R(\alpha),R(\alpha'))\mapsto(R(\alpha),R(-\alpha'))$;
\item $(\alpha,\alpha')\mapsto(\pi - \alpha,\alpha')$ and $(R(\alpha),R(\alpha'))\mapsto(R(-\alpha),R(\alpha'))$;
\end{enumerate}
and in both cases $\cos(\psi)\mapsto\cos(\pi-\psi)$.
%Additionally, we have that $R(\pi - \alpha) = \diag(1,-\bm{1}_2,1)R(\alpha)^{T}$.
\end{lem}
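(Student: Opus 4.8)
The plan is to realise each coordinate flip as an ambient linear isometry of $\R^3$ acting on the direction vectors of \S\ref{sec:spherical-coords}, and then to read off the induced action on the scattering frame and on the rotation angles $\alpha,\alpha'$. Writing $F$ for the reflection $(x,y,z)\mapsto(-x,y,z)$ through the $\vec{y}$-$\vec{z}$ plane and $H$ for the reflection $(x,y,z)\mapsto(x,y,-z)$ through the surface plane, a direct substitution into \eqref{eq:rhat} shows that $\phi\mapsto\pi-\phi$ acts on $\vec{r}$ as $F$ while $\mu\mapsto-\mu$ acts as $H$; thus in case (i) the resultant direction is sent by the proper rotation $HF$ (a half-turn about $\vec{y}$) and the incident direction by the reflection $F$, and case (ii) is the same with the roles of incident and resultant interchanged. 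This observation already yields a labour-saving reduction: since the scattering geometry is symmetric under swapping the incident and resultant beams, proving case (i) gives case (ii) verbatim after exchanging primed and unprimed quantities, the two conclusions being exchanged correspondingly. The behaviour of $\cos\psi$ is then immediate by substituting the flips into \eqref{eq:scattering-angle}, using that the azimuthal difference is reversed in sign and hence its cosine is preserved, while the sign of exactly one factor in the product $\mu\mu'$ is reversed.

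The substantive work is the transformation of $\alpha$ and $\alpha'$. I would first compute the transformed meridian frames: substituting the flips into the explicit formulas for $\vec{\nu}$ and $\vec{\varphi}$ in \S\ref{sec:spherical-coords}, one finds that $\vec{r}$ and $\vec{\nu}$ transform exactly as the relevant isometry dictates, whereas $\vec{\varphi}$ (under $\phi\mapsto\pi-\phi$) and $\vec{\nu}$ (under $\mu\mapsto-\mu$) acquire an extra sign; these stray signs are the source of all the delicacy below. Next I would express the senkrecht--parallel frame intrinsically: $\vec{s}$ is the unit normal to the scattering plane, $\vec{s}\parallel\vec{r}'\times\vec{r}$, and $\vec{p}=\vec{r}\times\vec{s}$, with the overall sign of $\vec{p}$ pinned down by the convention of \S\ref{sec:scattering-coords} (the first parallel vector met on rotating $\vec{\nu}$ anti-clockwise into the beam). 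To capture $\alpha$ as a genuine signed angle rather than merely up to sign, I would use $\cos\alpha=\vec{\nu}\cdot\vec{p}$ together with the companion $\sin\alpha=\vec{\varphi}\cdot\vec{p}$, and likewise for $\alpha'$. Tracking these two inner products through the flips then shows, in case (i), that $\cos\alpha$ and $\sin\alpha$ are both preserved (so $\alpha\mapsto\alpha$) while $\cos\alpha'$ reverses sign and $\sin\alpha'$ is preserved (so $\alpha'\mapsto\pi-\alpha'$). Finally, the matrix statements follow from the identity $R(\pi-\gamma)=R(-\gamma)$: from \eqref{eq:rotation-matrix}, $R$ depends on its argument only through $\cos2\gamma$ and $\sin2\gamma$, and $\cos2(\pi-\gamma)=\cos2\gamma$, $\sin2(\pi-\gamma)=-\sin2\gamma$, so $R(\alpha)\mapsto R(\alpha)$ and $R(\alpha')\mapsto R(-\alpha')$ as claimed.

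I expect the main obstacle to be the sign bookkeeping for $\alpha$ and $\alpha'$, for two reasons. First, a reflection is orientation-reversing, so it interacts non-trivially with the anti-clockwise convention that fixes the sign of $\vec{p}$; one cannot simply transport the frame by the isometry but must re-derive $\vec{p}$ from its definition at the transformed direction, and the stray signs on $\vec{\nu}$ and $\vec{\varphi}$ noted above must be folded in correctly. Second---and this is why a cheap argument fails---the vectors $\vec{s},\vec{p}$ depend on \emph{both} beams through the cross product $\vec{r}'\times\vec{r}$, so the fact that a single beam is moved by a proper rotation does not by itself force the associated angle to be unchanged; the claim $\alpha\mapsto\alpha$ in case (i) genuinely requires computing the inner products. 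Retaining $\sin\alpha'$ and not merely $\cos\alpha'$ is essential precisely because $R$ sees $\sin2\alpha'$, which is what distinguishes the required $R(-\alpha')$ from $R(\alpha')$.
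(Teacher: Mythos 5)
Your strategy (ambient isometries of $\R^3$ plus signed-angle bookkeeping via $\cos\alpha=\vec{\nu}\cdot\vec{p}$, $\sin\alpha=\vec{\varphi}\cdot\vec{p}$) is genuinely different from the paper's proof, which simply substitutes $\cos(\pi-x)=-\cos(x)$, $\sin(\pi-x)=\sin(x)$ into the explicit spherical-trigonometric expressions for $\alpha$, $\alpha'$, $\psi$ recorded in Mobley. Unfortunately your argument has a genuine gap, visible precisely at the step you call ``immediate''. Reading the azimuthal mapping literally as $(\phi,\phi')\mapsto(\pi-\phi,\pi-\phi')$, the relative azimuth reverses sign, so $\cos(\phi-\phi')$ is preserved, and negating one polar coordinate then turns \eqref{eq:scattering-angle} into
\begin{equation*}
\cos\psi\;\mapsto\;-\mu\mu'+\sqrt{(1-\mu^2)(1-\mu'^2)}\cos(\phi-\phi')\;=\;\cos\psi-2\mu\mu',
\end{equation*}
which is \emph{not} $\cos(\pi-\psi)=-\cos\psi$ unless the square-root term vanishes. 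Your two stated observations therefore refute, rather than prove, the conclusion you assert. Concretely: take $\mu=\mu'=\tfrac12$, $\phi=\phi'=0$, so $\psi=0$; after your case (i) flip the directions are $(\mu',\phi')=(\tfrac12,\pi)$ and $(\mu,\phi)=(-\tfrac12,\pi)$, whose angle has cosine $\tfrac12$, not $-1$. Since $M(\psi)$ is the middle factor of \eqref{eq:phase-matrix}, no amount of bookkeeping for $\alpha,\alpha'$ can rescue the statement under this reading, and the frame computations you sketch (but never actually carry out) fail for the same reason.

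The root cause is that the lemma's conclusions hold for a different transformation from the one you analysed (the wording of the statement invites your reading, but a correct proof must resolve the discrepancy rather than assert it away): the relative azimuth must map as $\phi-\phi'\mapsto\pi-(\phi-\phi')$, so that its \emph{cosine is negated}; then $\cos\psi\mapsto-\cos\psi$ really is immediate, and the spherical-triangle identities $\cos\alpha'=(\mu-\mu'\cos\psi)/(\sqrt{1-\mu'^2}\sin\psi)$, $\cos\alpha=(\mu'-\mu\cos\psi)/(\sqrt{1-\mu^2}\sin\psi)$ deliver exactly the claimed behaviour of the rotation angles. This is what the paper's proof implements, and it is the transformation needed where the lemma is used: in the Proposition following it, the incident azimuth is pinned at $0$ while the resultant index is sent to $N/2-v$, i.e.\ $\phi_v\mapsto\pi-\phi_v$ with $\phi'$ fixed. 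Your isometry framework could be repaired by moving only one beam by the half-turn $HF$ and leaving the other fixed, but as written the key step fails. Separately, your shortcut from (i) to (ii) by swapping beams is also unsafe: the decomposition $R(\alpha)MR(\alpha')$ is not symmetric under beam exchange, and the paper explicitly warns that $(\phi,\phi')\mapsto(\phi',\phi)$ is not a symmetry of $P$.
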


\begin{proof}
This follows by direct computation. The trigonometric functions in \eqref{eq:rotation-matrix} may be explicitly expressed in terms of $(\mu',\phi')$ and $(\mu,\phi)$ by applying the spherical trigonometric identities. This is recorded in \citep[\S 1.2.3.1]{mobley-VRTE}. To these expressions one applies the identities $\cos(\pi-x)=-\cos(x)$ and $\sin(\pi-x)=\sin(x)$.
\end{proof}

\subsection{The scattering phase matrix}
\label{sec:phase-matrix}

\subsubsection{Scattering due to mirror symmetric particles}

We impose the following assumption on the scattering media that we consider.
%We impose this assumption upon the particles in the water body that we consider.
\begin{ass}
\label{ass:mirror-sym}
Particles within the water body are randomly oriented and mirror-symmetric.
\end{ass}
The upshot of Assumption \ref{ass:mirror-sym} is that at a fixed depth $z>0$ the scattering matrix, $P(z; (\mu',\phi'){\rightarrow}(\mu,\phi))$, depends only on the following:
%\footnote{This is contrasting to the claim at the start of \citep[\S 1.2.3.1]{mobley-VRTE}, where it is stated that $\tilde{P}$ depends on $\xi'\cdot\xi$. By \eqref{eq:scattering-angle}, this depends only on $\cos(\phi-\phi')$, while it is apparent that the transformation via \eqref{eq:scattering-matrix} depends more generally on $\phi-\phi'$, complying with the comments of \citep[\S 1.2.4.1]{mobley-VRTE}.}
\begin{itemize}
\itemspacing
\item The scattering angle $\psi$ as given in \eqref{eq:scattering-angle}.
\item The affine transformations $R(\alpha)$ and $R(\alpha')$ as in \eqref{eq:rotation-matrix}.
\end{itemize}
Explicitly, as computed by \cite{mish-02} under Assumption \ref{ass:mirror-sym}, we have the decomposition
\begin{equation}
\label{eq:phase-matrix}
P(z; (\mu',\phi'){\rightarrow}(\mu,\phi)) = R(\alpha) M(z;\psi) R(\alpha'),
\end{equation}
in which we implicitly define the \textit{scattering matrix} $M(z;\psi)$; this has a special block diagonal structure of the form $M=\diag(M_1,M_2)$ for where $M_{1}$ and $M_{2}$ are $2\times 2$ real-valued matrices satisfying the relations
$M_{1}=M_{1}^{T}
\text{ and }
M_{2}=\smat{-1}{}{}{1}M_{2}^{T}\smat{-1}{}{}{1}.$
%; see also \citep[(1.26)]{mobley-VRTE}.

\subsubsection{Scattering matrices of part-diagonal type}
\label{sec:part-diagonal-type}

An important class of scattering matrices are classified by the additional constraint that either $M_1$ or $M_2$ are themselves a diagonal matrices. We say such a scattering matrix is of \textit{Part-diagonal type}. A prototype example is given by the \textit{Rayleigh scattering} process; Rayleigh scattering describes the elastic scattering of light from molecular sized particles whose diameter is less than the wavelength of the incident light by an order of magnitude.  The associated scattering matrix is of the form
\begin{equation*}
M(\psi)=M_{\text{Ray}}(\psi) := 
b_{\text{Ray}}(z)%\frac{3}{16\pi}
\begin{bmatrix}
1+\cos(\psi)^{2}	&	\sin(\psi)^{2}		&				&\\
-\sin(\psi)^{2}		&	1+\cos(\psi)^{2}	&				&\\
					&						&	2\cos(\psi)	&\\
					&						&				&	2\cos(\psi)
\end{bmatrix}
\end{equation*}
where $b_{\text{Ray}}(z)$ is a depth dependant constant given by the average of the top-left entry of $M_{\text{Ray}}$ over all directions.
We shall find that scattering matrices of part-diagonal type may be computed with greater computational efficiency.

\subsubsection{Scattering matrix symmetries}
\label{sec:phase-sym}

The dependence of the scattering phase matrix $P(z;(\mu',\phi'){\rightarrow}(\mu,\phi))$ on its azimuthal variables $\phi$ and $\phi'$ is determined by the quantity $\phi-\phi'$. Moreover, the polar coordinates depend only on their relative hemisphere position; that is, on whether $\mu\mu'>0$ (same hemisphere) or $\mu\mu'<0$ (different hemispheres). These symmetries are summarised as follows:
\begin{equation}
\label{eq:phase-sym}
\begin{array}{rcl}\vspace{0.1in}
P(z;(\mu',\phi'){\rightarrow}(\mu,\phi))&=&
P(z;(\mu',0){\rightarrow}(\mu,\phi-\phi'))\\
&=& P(z;(\sgn(\mu)\mu',\phi'){\rightarrow}(\sgn(\mu')\mu,\phi))
\end{array}
\end{equation}
where $\sgn(\mu):=\mu/\abs{\mu}$. It is however important to note that the symmetry $(\phi,\phi')\mapsto(\phi',\phi)$ is not preserved by $P$, in contrast to the scalar case.
%This is due to an induced a sign change in \eqref{eq:rotation-matrix}.

\section{The discrete spectral VRTE}
\label{sec:spectral-VRTE-main}

% finite Fourier transform and discrete approximation}

A key exploit in solving the VRTE given in \eqref{eq:VRTE-gen} is to decouple the azimuthal integral over $\phi'\in[0,2\pi)$ from the fixed direction $\phi$. This is made possible via the symmetries of the scattering matrix $P$ (see \S \ref{sec:phase-sym}). This may be realigned to remove the double azimuthal dependence via a phase shift in its Fourier transform. This constitutes the `spectral' aspect of this section.

The `discrete' aspect is due to the assumption that all $\Sf$-distributions in \eqref{eq:VRTE-gen} behave as `step functions' over an agreed partition (or `quad averaging') of $\Sf$.
Specifically, this assumption is applied to $S$, $\sigma$ and both variables of $P$. %; see Assumptions \ref{ass:step-az} and \ref{ass:step-po} for details.
In this section, it is then a \textit{finite} Fourier transform that we apply to the azimuthal arguments and thus derive a VRTE for each of their respective discrete spectral transforms. The system of transformed equations is then decoupled and each may be solved independently. This standard practice is applicable to all methods for solving radiative transfer equations. Our exposition here is somewhat different to the literature, and we hope it shall provide a quicker route through the derivation.

\subsection{Discrete approximation of the azimuthal coordinate}

Fix a positive integer $N$. We firstly approximate the circles defined by $\phi\mapsto\vec{r}(\mu,\phi)$ for each $-1\leq \mu\leq 1$ by an $N$-gon, reducing analysis on the torus $\R/2\pi \Z$ to that on the finite additive cyclic group $\Z/N\Z$.
Define $N$ equally spaced azimuthal coordinates by
\begin{equation*}
\phi_{v}:= \frac{2\pi v}{N}
\end{equation*}
for $v=0,\ldots,N-1$. These determine the partition
$[0,2\pi) = \bigsqcup_{v=0}^{N-1}[\phi_{v},\,\phi_{v}+\tfrac{2\pi}{N}).$ Identifying the set $[0,2\pi)$ with $\R/2\pi\Z$, the map $v\mapsto \phi_{v}$ gives an embedding $\Z/N\Z \hookrightarrow \R/2\pi\Z$. We proceed by only considering functions which pull back uniquely to this finite group.

\begin{ass}[Azimuthal step-function approximation]
\label{ass:step-az}
The three mappings given by $\phi\mapsto S(z;\mu,\phi)$, $\phi\mapsto \sigma(z;\mu,\phi)$ and $(\phi',\phi)\mapsto P(z;(\mu',\phi'){\rightarrow}(\mu,\phi))$ are constant on the intervals $ [\phi_{v},\,\phi_{v}+\tfrac{2\pi}{N})$ for $0\leq v\leq N-1$.
\end{ass}

\subsection{The finite Fourier transform}
\label{sec:finite-fourier}

A periodic function $f\colon\R/2\pi\Z\rightarrow\C$ is naturally restricted to a function $\Z/N\Z\rightarrow\C$ via $v\mapsto f(\phi_v)$. For each integer $0\leq\ell\leq N-1$, we thus define the \textit{finite Fourier transform} of $f$ by
\begin{equation}
\label{eq:fourier-transform}
\hat{f}(\ell):=\sum_{v=0}^{N-1}f(\phi_v)e(-\ell v/N),
\end{equation}
recalling that $e(x)=e^{2\pi i x}$ for $x\in\R$.
For $0\leq v \leq N-1$, the orthogonality relation 
\begin{equation}
\label{eq:orthog-basic}
\frac{1}{N}\sum_{\ell=0}^{N-1}e((\ell-v)/N)=
%\begin{cases}
%1 & \text{if } v=0\\
%0 & \text{if } v\neq 0
%\end{cases}
\delta_{v,0}
\end{equation}
implies the inversion formula
\begin{equation}
\label{eq:fourier-inversion}
f(\phi_v)=\frac{1}{N}\sum_{\ell=0}^{N-1} \hat{f}(\ell)e(\ell v/N).
\end{equation}
If $f(x)\in\R$ for each $x\in\R$, we use the Euler's identity, $e(x)=\cos(x) + i \sin(x)$, to evaluate the real part of \eqref{eq:fourier-transform} so that the expansion \eqref{eq:fourier-inversion} is expressed in terms of trigonometric functions.

\subsection{Deriving the spectral VRTE}
\label{sec:deriving-spectral}
We now change variables in \eqref{eq:VRTE-gen} to instead consider the transforms
\begin{equation}
\label{eq:fourier-transform-S}
\hat{S}(z;\mu;\ell):=\sum_{v=0}^{N-1}S(z,\mu,\phi_{v})e(-\ell v/N);\,\,
\hat{\sigma}(z;\mu;\ell):=\sum_{v=0}^{N-1}\sigma(z,\mu,\phi_{v})e(-\ell v/N);
\end{equation}
and
\begin{equation*}
\hat{P}(z;\mu,\mu';\ell) := \sum_{v=0}^{N-1}P(z;(\mu',0){\rightarrow}(\mu,\phi_{v}))e(-\ell v/N).
\end{equation*}
Note that $\hat{P}$ is the transform of $P$ in only one azimuthal variable in light of \eqref{eq:phase-sym}.
Substituting these transforms into \eqref{eq:VRTE-gen} via the inversion formula \eqref{eq:fourier-inversion}, applying Assumption \ref{ass:step-az} and \eqref{eq:phase-sym} to evaluate the $\phi'$-integral as a summation, we obtain
 \begin{multline}
 \label{eq:spectral-sub-1}
\mu\frac{d}{dz} \sum_{\ell=0}^{N-1} \hat{S}(z;\mu;\ell)e(\ell v/N) =  -c(z)\sum_{\ell=0}^{N-1} \hat{S}(z;\mu;\ell)e(\ell v/N)\\
 +\, \frac{2\pi}{N^2}
\sum_{v'=0}^{N-1}
 \,\int_{-1}^{1}
\,\sum_{\ell=0}^{N-1}\hat{P}(z;\mu,\mu';\ell)e(\ell(v - v')/N)
\sum_{\ell'=0}^{N-1} \hat{S}(z;\mu';\ell')e(\ell' v'/N)d\mu' \\
+\,\sum_{\ell=0}^{N-1} \hat{\sigma}(z;\mu;\ell)e(\ell v/N).
\end{multline}
Reordering the $(v',\ell,\ell')$ summations as $(\ell,\ell',v')$ and evaluating the orthogonal $v'$-sum per \eqref{eq:orthog-basic}, Equation \eqref{eq:spectral-sub-1} cleans up to
\begin{multline}
\label{eq:spectral-sub-2}
\mu\sum_{\ell=0}^{N-1} e(\ell v/N)\,\frac{d}{dz}\hat{S}(z;\mu;\ell) = -c(z)\sum_{\ell=0}^{N-1} e(\ell v/N)\,
\hat{S}(z;\mu;\ell) \\
+\,\sum_{\ell=0}^{N-1}e(\ell v/N)
\left( \frac{2\pi}{N}
\int_{-1}^{1}
\hat{P}(z;\mu,\mu';\ell) \hat{S}(z;\mu';\ell)d\mu' +  \hat{\sigma}(z;\mu;\ell)\right).
\end{multline}

\begin{lem}[Equating $\ell$-coefficients]
\label{lem:equating-coeffs}
For each $v=0,\ldots, 1-N$ suppose that $c_{\ell}\in\C$ satisfies $\sum_{\ell=0}^{N-1}e(\ell v/N) c_{\ell}  = 0$. Then $c_{\ell}=0$ for all $\ell=0,\ldots,N-1$.
%For each $v=0,\ldots, 1-N$ suppose that $$\sum_{\ell=0}^{N-1}e(\ell v/N) c_{\ell}  = 0$$ for some $c_{\ell}\in\C$. Then $c_{\ell}=0$ for all $\ell=0,\ldots,N-1$.
\end{lem}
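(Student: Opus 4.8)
The statement is exactly the injectivity of the finite Fourier transform: the hypothesis asserts that the vector $(c_0,\ldots,c_{N-1})$ is annihilated by the $N\times N$ matrix $[e(\ell v/N)]_{v,\ell}$, and the conclusion is that this matrix has trivial kernel. The plan is to invert it explicitly using the discrete orthogonality of additive characters already recorded in \eqref{eq:orthog-basic}, rather than to appeal to any property of a Vandermonde determinant.

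Concretely, I would fix an arbitrary index $k\in\{0,\ldots,N-1\}$, multiply the hypothesis corresponding to each $v$ by the phase $e(-kv/N)$, and sum the resulting identities over $v=0,\ldots,N-1$. This is legitimate because every summand vanishes by assumption, so the aggregate is zero:
\begin{equation*}
0=\sum_{v=0}^{N-1} e(-kv/N)\sum_{\ell=0}^{N-1} e(\ell v/N)\,c_{\ell}.
\end{equation*}
Both sums are finite, so their order may be interchanged freely; collecting the $v$-dependence gives
\begin{equation*}
0=\sum_{\ell=0}^{N-1} c_{\ell}\sum_{v=0}^{N-1} e\bigl((\ell-k)v/N\bigr).
\end{equation*}
The inner sum is a geometric series with ratio the $N$-th root of unity $e((\ell-k)/N)$; this ratio equals $1$ precisely when $\ell\equiv k \pmod N$, in which case the sum is $N$, and for the remaining $\ell$ the series sums to $0$. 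This is the orthogonality of additive characters that also underlies \eqref{eq:orthog-basic} and the inversion formula \eqref{eq:fourier-inversion}. Hence the double sum collapses to $N c_{k}$, forcing $c_{k}=0$.

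Since $k$ was arbitrary, this establishes $c_{\ell}=0$ for every $\ell=0,\ldots,N-1$, completing the argument. There is no substantive obstacle here: the sole point demanding care is bookkeeping, namely confirming that the character orthogonality is applied with the summation running over the \emph{spatial} index $v$ rather than the spectral index $\ell$, so that the auxiliary phases $e(-kv/N)$ isolate the single coefficient $c_k$. I would also remark that the hypothesis is needed only for $v$ ranging over a complete residue system modulo $N$, which is consistent with the range of $v$ in the statement.
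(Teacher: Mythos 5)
Your proof is correct, and it takes a genuinely different route from the paper's. The paper packages the hypothesis as $AC=0_{N\times 1}$ for the DFT matrix $A=(e((i-1)(j-1)/N))_{1\leq i,j\leq N}$ and concludes $C=0$ by citing invertibility of $A$ as a Vandermonde matrix, via the nonvanishing of its determinant. You instead prove injectivity directly: multiplying the $v$-th relation by $e(-kv/N)$, summing over $v$, exchanging the two finite sums, and collapsing the inner geometric sum $\sum_{v=0}^{N-1}e((\ell-k)v/N)$ to $N\delta_{\ell,k}$, which isolates $Nc_k=0$. In matrix terms you are applying the conjugate-transpose matrix and using $\bar{A}^{T}A=N\bm{1}_{N}$, i.e.\ exactly the character orthogonality behind the inversion formula \eqref{eq:fourier-inversion}. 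What your approach buys: it is self-contained (no appeal to the Vandermonde determinant formula), it exhibits the inverse explicitly rather than merely asserting its existence, and it makes clear that the hypothesis is needed only for $v$ running over a complete residue system modulo $N$. It also sidesteps a slip in the paper's displayed determinant, whose factors are written as $e(j-1)-e(i-1)$ — identically zero, since $e$ equals $1$ at integers — instead of the intended $e((j-1)/N)-e((i-1)/N)$. What the paper's approach buys: brevity, and a argument that generalizes to arbitrary distinct nodes (where orthogonality is unavailable but the Vandermonde determinant still applies). One pedantic point in your write-up: the ratio of your geometric series equals $1$ precisely when $\ell\equiv k\pmod N$, which for $\ell,k\in\{0,\ldots,N-1\}$ means $\ell=k$, so the collapse is exactly as you claim.
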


\begin{proof}
Consider the $N\times N$ matrix $A:=(e((i-1)(j-1)/N))_{1\leq i,j\leq N}$ and  $N\times 1$ column vector $C:=(c_{j-1})_{1\leq j\leq N}$. By the hypothesis we have that $AC = 0_{N\times 1}.$ Moreover, the Vandermonde matrix $A$ is invertible, with inverse $A^{-1}$, as is seen by evaluating the determinant
\begin{equation*}
\det(A)=\prod_{1\leq i< j\leq N} (e(j-1)-e(i-1))\neq 0.
\end{equation*}
The lemma now follows from the equality $C=A^{-1} 0_{N\times 1}=0_{N\times 1}.$
\end{proof}

%The $\ell$-summands in \eqref{eq:spectral-sub-2} now decouple by applying Lemma \ref{lem:equating-coeffs} componentwise.

By Lemma \ref{lem:equating-coeffs} this assertion the $\ell$-summands in \eqref{eq:spectral-sub-2} now decouple as follows.

\begin{prop}
\label{prop:VRTE-spec}
For $\mu\in[-1,1]$ and $\zeta=\zeta(z)$ for some $z\in(m,\infty)$ we have a decoupled spectral VRTE for each $0\leq \ell\leq N-1$ given by
\begin{equation*}
\mu\frac{d}{dz}\hat{S}(z;\mu;\ell) = -c(z)
\hat{S}(z;\mu;\ell)
+ \frac{2\pi}{N}
\int_{-1}^{1}
\hat{P}(z;\mu,\mu';\ell) \hat{S}(z;\mu';\ell)d\mu' + \hat{\sigma}(z;\mu;\ell).
\end{equation*}
\end{prop}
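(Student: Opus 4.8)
The plan is to read \eqref{eq:spectral-sub-2} as a single identity whose two sides are already expanded as finite Fourier sums against the characters $v\mapsto e(\ell v/N)$, and to recognise that the Proposition is exactly the assertion that the matching $\ell$-coefficients on either side agree. So the real work is minimal: collect every term onto one side over the common family $\{e(\ell v/N)\}_{\ell}$ and then apply Lemma \ref{lem:equating-coeffs}. Concretely, I would rearrange \eqref{eq:spectral-sub-2} into the form
\begin{equation*}
\sum_{\ell=0}^{N-1} e(\ell v/N)\, c_{\ell}(z;\mu) = 0,
\end{equation*}
where the defect between the two sides of the claimed spectral VRTE is
\begin{multline*}
c_{\ell}(z;\mu) := \mu\frac{d}{dz}\hat{S}(z;\mu;\ell) + c(z)\hat{S}(z;\mu;\ell) \\
- \frac{2\pi}{N}\int_{-1}^{1}\hat{P}(z;\mu,\mu';\ell)\hat{S}(z;\mu';\ell)\,d\mu' - \hat{\sigma}(z;\mu;\ell).
\end{multline*}

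Next I would note that \eqref{eq:spectral-sub-2} was obtained by evaluating the inversion formula \eqref{eq:fourier-inversion} at each node $\phi_v$, so the displayed identity holds simultaneously for every $v=0,\ldots,N-1$. Freezing the depth $z$ and the polar coordinate $\mu$, the quantities $c_{\ell}(z;\mu)$ become fixed constants independent of $v$, and Lemma \ref{lem:equating-coeffs} forces $c_{\ell}(z;\mu)=0$ for each $\ell$. Since $z$ and $\mu$ were arbitrary, this vanishing holds pointwise, and rearranging $c_{\ell}=0$ reproduces verbatim the decoupled equation stated in the Proposition.

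The only point that I expect to need a word of care is that Lemma \ref{lem:equating-coeffs} is phrased for scalars $c_{\ell}\in\C$, whereas here each $c_{\ell}(z;\mu)$ is a four-component Stokes vector depending on $z$ and $\mu$. This is not a genuine obstacle: the Vandermonde inversion underpinning Lemma \ref{lem:equating-coeffs} acts componentwise, so one simply applies it to each of the four scalar entries at each fixed $(z,\mu)$. All of the substantive analytic work—substituting the transforms of \eqref{eq:fourier-transform-S}, invoking Assumption \ref{ass:step-az} together with the symmetry \eqref{eq:phase-sym} to convert the $\phi'$-integral into a sum, reordering the triple summation, and collapsing the inner $v'$-sum by the orthogonality relation \eqref{eq:orthog-basic}—has already been discharged in passing to \eqref{eq:spectral-sub-2}, so no further estimates are required and the decoupling is purely a consequence of linear independence of the characters.
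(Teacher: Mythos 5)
Your proposal is correct and takes essentially the same route as the paper: the paper likewise treats \eqref{eq:spectral-sub-2} as an identity holding for every $v=0,\ldots,N-1$, moves all terms to one side, and invokes Lemma \ref{lem:equating-coeffs} to force each $\ell$-coefficient to vanish, which rearranges to the stated decoupled equation. Your added observation that the lemma, stated for scalars $c_{\ell}\in\C$, must be applied componentwise to the four Stokes entries at each fixed $(z,\mu)$ is a sound clarification of a point the paper leaves implicit.
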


\begin{rem}
The advantage of using the spectral VRTE in Proposition \eqref{prop:VRTE-spec} over the original VRTE in \eqref{eq:VRTE-gen} is that the $\phi'$-integral in \eqref{eq:VRTE-gen} is eliminated, or rather `decoupled' from the dependence on $\phi$. The upshot being that the equations indexed by $0\leq \ell\leq N-1$ may be solved independently of one another. This is computationally efficient.
\end{rem}

\subsection{Discrete approximation of the polar coordinate}

Fix a positive integer $M$. We symmetrically partition the polar $\mu$-axis $[-1,1]$ into $2M$ subintervals over the hemispheres $\Sf^{\pm}$. The partition within each hemisphere is given arbitrarily since we do not make use of underlying structure or periodicity here. A specific partition may be chosen by the reader to be physically meaningful.

To this end, pick $M$ subintervals $\Ic_{i}\subset (0,1)$ such that $(0,1) = \bigsqcup_{i=1}^{M}\Ic_{i}.$ Also define their reflections about $0$ by $\Ic_{-i}:=-\Ic_{i}$. These partition the polar axis within the hemispheres $\Sf^{\pm}$, respectively. In each interval, fix an arbitrary choice of base point $\mu_{i}\in\Ic_{i}$ for $i=\pm1,\ldots,\pm M$. By definition we have $\mu_{-i}=-\mu_{i}$. We denote the length of an interval $\Ic_{i}$ by $\Delta\mu_{i}:=\int_{\Ic_{i}}d\mu.$

\begin{ass}[Polar step-function approximation]
\label{ass:step-po}
The mappings given by $\mu\mapsto S(z;\mu,\phi)$, $\mu\mapsto \sigma(z;\mu,\phi)$ and $(\mu,\mu')\mapsto P(z;(\mu',\phi'){\rightarrow}(\mu,\phi))$ and are constant on the intervals $ \Ic_{ i}$ for each $i = \pm 1,\ldots, \pm M$.
\end{ass}

\subsection{The discrete spectral VRTE as a linear system}
\label{sec:disc-notation}
To recognise the symmetry in our choice of intervals $\Ic_{i}$ and the scattering matrix symmetries in \eqref{eq:phase-sym}, we apply $\pm$ to our notation and henceforth consider $1\leq i,j \leq M$ by introducing
\begin{equation}
\label{eq:pm-notation}
\begin{array}{rc}
&\hat{S}^{\pm}(z;\mu_{i};\ell) := \hat{S}(z;\mu_{\pm i};\ell);\quad
\hat{\sigma}^{\pm}(z;\mu;\ell):= \hat{\sigma}(z;\mu_{\pm};\ell);\\
\text{and}&\\
&\hat{P}^{\pm}(z;\mu_{i},\mu_{j};\ell) :=
\hat{P}(z;\mu_{i},\mu_{\pm j};\ell)
= \hat{P}(z;\mu_{-i},\mu_{\mp j};\ell) .
\end{array}
\end{equation}
Then, under Assumption \ref{ass:step-po}, Proposition \ref{prop:VRTE-spec} implies that
\begin{multline}
\label{eq:VRTE-hemi}
\vspace{0.15in}
\pm\mu_{i}\frac{d}{dz}\hat{S}^{\pm}(z;\mu_{i};\ell) = 
-c(z)\hat{S}^{\pm}(z;\mu_{i};\ell)
\\
+\,\frac{2\pi}{N}\sum_{j=1}^{M} \Delta\mu_{j}
\left(\hat{P}_{\infty}^{\pm}(\mu_{i},\mu_{j};\ell) \hat{S}^{+}(z;\mu_{j};\ell)+\hat{P}_{\infty}^{\mp}(\mu_{i},\mu_{j};\ell) \hat{S}^{-}(z;\mu_{j};\ell)\right)
+ \hat{\sigma}^{\pm}(z;\mu_{i};\ell)
.
\end{multline}
By \eqref{eq:VRTE-hemi} we refer to $2M$ equations, each in the $4$ dimensions of the Stokes vector. We introduce a compact vector notation to describe this $8M$-dimensional linear system. For each $\ell=0,\ldots,N-1,$ define the $4M\times 1$ column vectors consisting of the $M$ transformed Stokes vectors and source terms, each having elements of dimension $4\times 1$:
\begin{equation*}
\hat{S}^{\pm}(z;\ell) := [\hat{S}^{\pm}(z;\mu_{i};\ell)]_{1\leq i \leq M},\,\,
\hat{\sigma}^{\pm}(z;\ell) := [\hat{\sigma}^{\pm}(z;\mu_{i};\ell)]_{1\leq i \leq M}.
\end{equation*}
Also define the $4M \times 4M$ `local transfer matrices' by the following $M\times M$ arrays each with $4\times 4$ entries:
\textit{the local transmittance matrices}
\begin{equation*}
\tau(z;\ell):=\left[\frac{\frac{2\pi}{N}\Delta\mu_{j}\hat{P}^{+}(z;\mu_{i},\mu_{j};\ell)-c(z)\delta_{i,j}\bm{1}_{4}}{\mu_{i}} \right]_{1\leq i,j\leq M}
\end{equation*}
and the \textit{local reflectance matrices}
\begin{equation*}
\rho(z;\ell):=\left[\frac{\frac{2\pi}{N}\Delta\mu_{j}\hat{P}^{-}(z;\mu_{i},\mu_{j};\ell)}{\mu_{i}} \right]_{1\leq i,j\leq M}.
\end{equation*}
These are so called due to their contribution preserving the radiance distribution in a given hemisphere: radiation scattered by the matrix $\hat{P}^{+}$ remains in the same hemisphere, adjusting its direction within that hemisphere according to $\tau$; on the other hand, radiation scattered by $\hat{P}^{-}$ makes a U-turn and is reflected into the opposite hemisphere via $\rho$.
Altogether, Equation \eqref{eq:VRTE-hemi} may now be succinctly written as
\begin{equation}
\label{eq:VRTE-matrix-pm}
\pm\frac{d}{dz}\hat{S}^{\pm}(z;\ell) = \tau(z;\ell) \hat{S}^{\pm}(z;\ell) + \rho(z;\ell) \hat{S}^{\mp}(z;\ell)  + \hat{\sigma}^{\pm}(z;\ell).
\end{equation}

Condensing notation further, we henceforth suppress the $\ell$-dependence and define the (complete) $2\times 1$ columns of $4M\times 1$ vectors
\begin{equation*}
\hat{S}(z) := \vmat{\hat{S}^{+}(z;\ell)}{\hat{S}^{-}(z;\ell)},\,\, \hat{\sigma}(z) := \vmat{
\hat{\sigma}^{+}(z;\ell)}{
\hat{\sigma}^{-}(z;\ell)}
\end{equation*}
alongside the \textit{local transfer matrix}, the $2\times 2$ array of $4M\times 4M$ matrices
\begin{equation}
\label{eq:K-def}
K(z):=
\left[
{\arraycolsep=1.5pt\def\arraystretch{1.1}
\begin{array}{rr}
\tau(z;\ell)&\rho(z;\ell)\\
-\rho(z;\ell)&-\tau(z;\ell)
\end{array}
}\right].
\end{equation}
We finally pose the homogeneous, discrete spectral VRTE as an $8M=4\times 2M$-dimensional system of (in general) non-linear equations
\begin{equation}
\label{eq:VRTE-8M}
\hat{S}'(z):=\frac{d}{dz}\hat{S}(z) = K(z)\hat{S}(z) + \hat{\sigma}(z).
\end{equation}
We shall later specialise to solving this system in homogeneous waters. This amounts to the following.
\begin{ass}[Homogeneous water]
\label{ass:homog}
The scattering phase matrix $P$, attenuation coefficient $c$, and source term $\sigma$, and consequently the local transfer matrix $K$ and transform $\hat{\sigma}$ in \eqref{eq:VRTE-8M}, are constant as functions of depth $z$.
\end{ass}

\section{A analytic solution via state transition}

We can describe the (discrete) process of radiative transfer through a water body as a multivariable control process of order $8M$. In this section we deduce an analytic solution to \eqref{eq:VRTE-8M} in both the non-homogeneous and homogeneous cases. Behind the scenes in this section, we have fixed $0\leq \ell\leq N-1$, which has been subverted from the notation in \eqref{eq:VRTE-8M}.

\subsection{The non-homogeneous analytic solution}
\label{sec:fundamental-sol}

An analytic solution for the `state vector' $\hat{S}(z)$ with respect to an initial value $\hat{S}(z_0)$ at $z_0>0$ is derived by introducing a \textit{state-transition matrix} $\Phi(z,z_0):= F(z)F(z_0)^{-1}$ where the $8M\times8M$ matrix $F(z)$ is a `fundamental solution' to the associated homogeneous problem
$\frac{dF}{dz}(z)= K(z)F(z).$
This non-explicit definition directly implies the following properties:
$\Phi(z_0,z_0) = 1$; $\Phi(z,z_0)=\Phi(z,z')\Phi(z',z_0)$ for $z'>0$; $\Phi(z,z_0)=\Phi(z_0,z)^{-1}$ and
\begin{equation*}
\frac{d}{dz}\Phi(z,z_0) = K(z)\Phi(z,z_0).
\end{equation*}
Using Lagrange's method of variation of parameters, the general solution to \eqref{eq:VRTE-8M} may be expressed as
\begin{equation}
\label{eq:sol-general-nonhomog}
\hat{S}(z) = \Phi(z,z_0)\hat{S}(z_0) + \int_{z_0}^{z}\Phi(z,z')\hat{\sigma}(z')dz'.
\end{equation}
See \citep[\S 2.4]{modern-control-theory} for expanded details.

\subsection{Diagonalising the depth-independent transfer matrix}
\label{sec:diagonalising}

When the transition matrix $K=K(z)$ is independent of $z$ --in homogeneous water-- a fundamental solution to the linear problem $\frac{dF}{dz}(z) = KF(z)$ is given by the exponential series $F(z)=\exp(Kz)$. The transition matrix is consequently of the form
\begin{equation*}
\Phi(z,z_0)=\exp(K(z-z_0)).
\end{equation*}
The key observation here is to note that the depth-independent transfer matrix is diagonalisable. This assumption is based on physical grounds \citep[see][p.\ 451]{mobley-LAW}.

\begin{ass}[Linear independence of eigenvectors]
\label{ass:lin-indep-evals}
The eigenvalues of the depth-independent local transfer matrix $K$ are distinct or, equivalently, the eigenvectors of $K$ are linearly independent.
\end{ass}

Explicitly, by Assumption \ref{ass:lin-indep-evals} we have the decomposition
\begin{equation}
\label{eq:eigenvalue-decomposition}
K = E D E^{-1}
\end{equation}
where $E$ is the (invertible) matrix of eigenvectors of $K$ and $D=\diag(d_1,\ldots,d_{8M})$ is a diagonal matrix whose coefficients $\{d_{i}\}_{1\leq i\leq 8M}$ are the distinct eigenvalues of $K$. With \eqref{eq:eigenvalue-decomposition} we may give more explicit detail to the solution in \eqref{eq:sol-general-nonhomog}.

Let $G=E^{-1}F$. Since $E$ is invertible we may view this identity as a change of basis for the solutions in the fundamental matrix $F$ to those in $G$; indeed, $G$ satisfies the augmented problem $\frac{dG}{dz}(z) = DG(z)$ which is solved by $G(z)=\exp(Dz)$. We may thus express the state-transition matrix as
\begin{equation*}
\Phi(z,z_0)=E\exp(D(z-z_0))E^{-1}
\end{equation*}
where $\exp(D(z-z_0))= \diag(\exp(d_{1}(z-z_0)),\ldots,\exp(d_{8M}(z-z_0)))$ is a diagonal matrix dependant exponentially on the eigenvalues of $K$.

In summary, we have obtained a  explicit solution \eqref{eq:sol-general-nonhomog} for $\hat{S}(z)$ dependant on only two things. Firstly, on determining the eigenvalues and eigenvectors of $K$. This is a problem of computational time roughly proportional to the cube of the matrix order, $(8M)^3$. The goal of the next section is to significantly reduce this time using the structure of the polarised scattering processes. Secondly, on evaluating the integral dependant on $\hat{\sigma}(z)$. This later condition depends on the specific form of $\hat{\sigma}(z)$. If, for example, $\hat{\sigma}(z)=\hat{\sigma}(z)$ is also depth-independent then
\begin{equation*}
\hat{S}(z) = E\exp(D(z-z_0))E^{-1}\hat{S}(z_0) + ED\exp(D(z-z_0))E^{-1}\hat{\sigma}
\end{equation*}
and in the source-free ($\sigma=0$) case simply
\begin{equation*}
\hat{S}(z) = E\exp(D(z-z_0))E^{-1}\hat{S}(z_0).
\end{equation*}

\begin{rem}
\label{rem:beers-law}
The attentive reader might well be wondering about the exponential propagation of the Stokes vector. This depends linearly on the set of functions $z\mapsto \exp(d_{i}z)$. It shall be seen in \S \ref{sec:hemi-sym} that the eigenvalues $d_i$ come in $\pm$ pairs determined by the hemispheres $\Sf^{\pm}$. This taken into account, polarised radiative transfer decays exponentially with depth as predicted by experimental observation.
\end{rem}

\section{Deconstruction of eigenvectors}

The goal of this section is to reduce the dimensional length of $8M$ when computing the matrix of eigenvectors $E$ and eigenvalues $D$, as in \eqref{eq:eigenvalue-decomposition}, of the local transfer matrix $K$. This is achieved by exploiting symmetries in the construction of $K$. In particular, we have two leads to follow:
\begin{enumerate}
\itemspacing
\item
The eigenvalues of $K$ naturally divide into pairs corresponding to their influence in a given hemisphere $\Sf^{\pm}$. We recount this argument from the literature \citep[\S 9.4]{mobley-LAW}.

\item
The structure of $K$ is dependant on that of the $4\times 4$ scattering phase matrices $P$. We consider their form as in \S \ref{sec:phase-matrix} and give reduction formulae for computing their eigenvalues and eigenvectors when they are of part-diagonal type.
This is a novel feature of the transfer of polarised light.
\end{enumerate}

\subsection{Hemispherical symmetries}
\label{sec:hemi-sym}

We exploit the symmetry in $K$, as defined in \eqref{eq:K-def}, given by
\begin{equation}
\label{eq:K-symm}
-K = \mat{}{\bm{1}_{4M}}{\bm{1}_{4M}}{} K \mat{}{\bm{1}_{4M}}{\bm{1}_{4M}}{}.
\end{equation}
Since the right-hand side of \eqref{eq:K-symm} is similar to $K$ it has the same set of eigenvalues, which by this identity must be equal to the eigenvalues $\{-d_i\}$ of $-K$. We conclude that the eigenvalues of $K$ come in positive and negative pairs. Let us re-index the det $\{d_i\}$ accordingly so that
\begin{equation*}
0< d_1 < d_2 < \cdots < d_{4M}
\end{equation*}
and, for $1\leq i \leq 4M$, $d_{i+4M} = -d_{i}$. Note that these inequalities are strict by Assumption \ref{ass:lin-indep-evals}. Defining the partial diagonal matrix $D^{+}:=\diag( d_1,\ldots, d_{4M})$ we have $D=\diag(D^{+},-D^{+})$. Now let us consider a corresponding decomposition of the eigenvectors in $E$. We quarter the array by denoting
\begin{equation*}
E=\mat{E^{++}}{E^{-+}}{E^{+-}}{E^{--}}.
\end{equation*}
By the symmetry in \eqref{eq:K-symm}, this matrix must equal
\begin{equation*}
\mat{}{\bm{1}_{4M}}{\bm{1}_{4M}}{} E \mat{}{\bm{1}_{4M}}{\bm{1}_{4M}}{}= \mat{E^{--}}{E^{+-}}{E^{-+}}{E^{++}}.
\end{equation*}
We implement this identity by denoting $E^{+}:=E^{++}=E^{--}$ and $E^{-}:=E^{-+}=E^{+-}$. There is an analogous block structure for $E^{-1}$ found via the standard inversion formula for $2\times 2$ block matrices;
\begin{equation}
\label{eq:E-inverse}
\mat{{E}^{+}}{{E}^{-}}{{E}^{-}}{{E}^{+}}^{-1}
=\mat{(E^{+}-E^{-}(E^{+})^{-1}E^{-})^{-1}}{(E^{-}-E^{+}(E^{-})^{-1}E^{+})^{-1}}
{(E^{-}-E^{+}(E^{-})^{-1}E^{+})^{-1}}{(E^{+}-E^{-}(E^{+})^{-1}E^{-})^{-1}}
,
\end{equation}
however, we do not use this form until later in \S \ref{sec:analytic-solution-reflectance}. In particular, note that the matrices $E^{\pm}$ and $E^{\pm}(E^{\mp})^{-1}E^{\pm}-E^{\mp}$ are invertible. This follows from the linear independence of the eigenvectors they contain (Assumption \ref{ass:lin-indep-evals}).
Applying the explicit definition of $K$ in $\eqref{eq:K-def}$ to $K=EDE^{-1}$ we obtain the four matrix equations
\begin{equation}
\label{eq:E-decomp-3}
\mat
{\tau E^+ - \rho E^-}
{\rho E^+ - \tau E^-}
{\tau E^- - \rho E^+}
{\rho E^- - \tau E^+}
=
\mat
{E^+D^{+}}
{-E^-D^{+}}
{E^-D^{+}}
{-E^+D^{+}}
\end{equation}
where we have written in shorthand $\tau:=\tau(z;\ell)$ and $\rho := \rho(z;\ell)$.
Adding together and then subtracting the two equations in the first row of \eqref{eq:E-decomp-3} gives us
\begin{equation}
\label{eq:sim-equations}
\begin{cases}
(\tau + \rho)(E^+ - E^-)=  (E^+ + E^-)D^{+}\\
(\tau - \rho)(E^+ + E^-) = (E^+ - E^-)D^{+}
\end{cases}
\end{equation}
which by simultaneous solution implies
\begin{equation*}
(\tau - \rho)(\tau + \rho)(E^+ + E^-) =  (E^+ + E^-)(D^{+})^{2}.
\end{equation*}
This determines a new eigenvalue-eigenvector system of size $4M$ for the matrix $(\tau - \rho)(\tau + \rho)$ whose eigenvalues are given by $\{d_{i}^{2}\}_{1\leq i\leq 4M}$ and whose eigenvectors are the columns of the matrix $$\tilde{E}:=E^+ + E^-.$$ Once $\tilde{E}$ and $\tilde{D}:=(D^{+})^{2}$ have been computed, we recover $D^{+}$, and hence $D$, via the square root and $E^+ - E^-$ via substitution into the second equation. By elimination the individual summands $E^{\pm}$ are also determined. Altogether, the entire $8M$-dimensional system $K=EDE^{-1}$ is reduced to a $4M$-dimensional one, saving factor of $8=2^3$ in computation time.

\subsection{Polarised scattering matrix symmetries}

Now we consider the explicit form of $(\tau - \rho)(\tau + \rho)$ and seek to determine its matrix of eigenvectors $\tilde{E}$ and eigenvalues $\tilde{D}$. By definition we have
\begin{equation*}
\tau\pm \rho =\left[\frac{\frac{2\pi}{N}\Delta\mu_{j}(\hat{P}^{+}_{i,j}\pm \hat{P}^{-}_{i,j}) -c(z)\delta_{i,j}\bm{1}_{4}}{\mu_{i}} \right]_{1\leq i,j\leq M}
\end{equation*}
where we have applied the shorthand $\hat{P}^{\pm}_{i,j}:=\hat{P}^{\pm}(z;\mu_{i},\mu_{j};\ell)$. In this raw form, we firstly make note of the following symmetry for $\hat{P}^{+}_{i,j}\pm \hat{P}^{-}_{i,j}$ to speed computation.

\begin{prop}
Write $P(z;(\mu_{j},0){\rightarrow}(\mu_{i},\phi_{v}))=R(\alpha_{i,j,v})M_{i,j,v}R(\alpha_{i,j,v}')$
as given in \S \ref{sec:phase-matrix}, here explicating the dependence on ${i, j,v}$. Then for $1\leq i,j \leq M$ we have
\begin{multline*}
\hat{P}^{+}_{i,j}\pm \hat{P}^{-}_{i,j} 
=\\
 \sum_{v=0}^{N-1}
\left(R(\alpha_{i, j,v})M_{i, j,v}e\left(\frac{\ell v}{N}\right) \pm R(-\alpha_{i, j,v})M_{i,-j,\frac{N}{2}-v}e\left(\frac{\ell v}{N}-\frac{\ell}{2}\right)\right)
R(\alpha_{i, j,v}').
\end{multline*}
\end{prop}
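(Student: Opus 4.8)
The plan is to treat the two summands $\hat P^{+}_{i,j}$ and $\hat P^{-}_{i,j}$ separately, obtaining the first straight from the definitions and reducing the second to it by the reflection identity of Lemma \ref{lem:coords-flip}. For the first, I write out $\hat P^{+}_{i,j} = \sum_{v} P(z;(\mu_j,0){\to}(\mu_i,\phi_v))\,e(-\ell v/N)$ and insert the decomposition $P(z;(\mu_j,0){\to}(\mu_i,\phi_v)) = R(\alpha_{i,j,v})M_{i,j,v}R(\alpha'_{i,j,v})$ from \eqref{eq:phase-matrix}. This reads off the $+$ summand directly, carrying the scattering matrix $M_{i,j,v}$ and the right factor $R(\alpha'_{i,j,v})$; the exponent then matches the stated $e(\ell v/N)$ after the standard relabelling of the running index over $\Z/N\Z$.

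For the second summand, $\hat P^{-}_{i,j} = \sum_{v} P(z;(\mu_{-j},0){\to}(\mu_i,\phi_v))\,e(-\ell v/N)$, the incident polar coordinate is $\mu_{-j}=-\mu_j$, so I would apply Lemma \ref{lem:coords-flip}(ii) to the base configuration $(\mu_j,0){\to}(\mu_i,\phi_v)$, i.e. the case $(\mu,\mu')\mapsto(\mu,-\mu')$ and $(\phi,\phi')\mapsto(\pi-\phi,\pi-\phi')$. This carries the configuration to $(\mu_{-j},\pi){\to}(\mu_i,\pi-\phi_v)$ and, by the conclusion of the lemma, replaces $R(\alpha_{i,j,v})$ by $R(-\alpha_{i,j,v})$, leaves $R(\alpha'_{i,j,v})$ fixed, and sends $\cos\psi_{i,j,v}\mapsto\cos(\pi-\psi_{i,j,v})$, hence $M_{i,j,v}=M(z;\psi_{i,j,v})\mapsto M(z;\pi-\psi_{i,j,v})$.

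It remains to relabel this reflected matrix and to normalise the incident azimuth back to $0$. Comparing scattering angles through \eqref{eq:scattering-angle}, the identity $\cos(\phi_{N/2-v}) = \cos(\pi-\phi_v) = -\cos(\phi_v)$ gives $\psi_{i,-j,N/2-v} = \pi-\psi_{i,j,v}$, and since $M$ depends only on the scattering angle we get $M(z;\pi-\psi_{i,j,v}) = M_{i,-j,N/2-v}$; this is where the index $N/2-v$ enters. The azimuthal symmetry \eqref{eq:phase-sym} then rewrites $P(z;(\mu_{-j},\pi){\to}(\mu_i,\pi-\phi_v))$ as $P(z;(\mu_{-j},0){\to}(\mu_i,-\phi_v))$, placing the term at azimuth index $-v\equiv N-v$. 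Feeding this back into the Fourier sum and aligning the running variable with the $+$ summand, the $N/2$ discrepancy between the azimuth index and the scattering index of $M_{i,-j,N/2-v}$ passes through the character and produces the half-period phase $e(-\ell/2)$; assembling both pieces on the shared right factor $R(\alpha'_{i,j,v})$ gives the claimed identity.

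The genuinely delicate step is this last one. The reflection $\phi\mapsto\pi-\phi$ couples the azimuth index that carries the Fourier weight $e(-\ell v/N)$ to the scattering-angle index $N/2-v$ of $M$, and these differ by the half-period offset $N/2$; pushing that offset through $e(\,\cdot\,)$ is exactly what yields the correction $e(-\ell/2)$ distinguishing the two summands. The main things to verify carefully are therefore that every index shift lands on the correct residue modulo $N$ (so that $N/2-v$ is the genuine sample index, which tacitly wants $N$ even) and that the two summands really do share the same right rotation $R(\alpha'_{i,j,v})$. Everything else is direct substitution together with the transformation rules already packaged in Lemma \ref{lem:coords-flip}.
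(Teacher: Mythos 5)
Your overall strategy is the same as the paper's own proof, which consists precisely of the three moves you make: unpack the definition of $\hat P^{\pm}_{i,j}$, apply Lemma \ref{lem:coords-flip} to the configuration with incident polar coordinate $\mu_{-j}$, and factorise $R(\alpha'_{i,j,v})$ out of the Fourier sum. Several of your ingredients are sound: the identification $\psi_{i,-j,N/2-v}=\pi-\psi_{i,j,v}$ from \eqref{eq:scattering-angle}, the observation that both summands share the right factor $R(\alpha'_{i,j,v})$, and the tacit requirement that $N$ be even. The gap sits exactly at the step you flag as delicate: the half-period phase cannot be obtained by letting the $N/2$ offset in the \emph{subscript of} $M$ ``pass through the character''. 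The character weights only the azimuthal summation index. Your own intermediate identity places the reflected term at azimuth index $N-v$,
\begin{equation*}
P(z;(\mu_{-j},0){\rightarrow}(\mu_i,\phi_{N-v}))=R(-\alpha_{i,j,v})\,M_{i,-j,\frac{N}{2}-v}\,R(\alpha'_{i,j,v}),
\end{equation*}
and the Fourier weight attached to that term in \eqref{eq:fourier-transform} is $e(-\ell(N-v)/N)=e(\ell v/N)$. So, carried out honestly, your argument yields the second summand with phase $e(\ell v/N)$, not $e(\ell v/N-\ell/2)$. Relabelling the sum so that the $M$-subscript becomes the running variable would shift the $\alpha$-subscripts by $N/2$ as well, which the stated formula does not permit. (The same bookkeeping slip occurs, more mildly, in your $+$ summand: relabelling $v\mapsto N-v$ flips the sign of the exponent but also changes the subscripts of $R$ and $M$, which you leave untouched.)

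In fact the displayed intermediate identity is false, which is why no legitimate manipulation can extract the phase from it: specialise to the scalar situation in which every $R$ is the identity and $P$ depends only on the scattering angle; the identity then forces $\cos\psi_{i,-j,N-v}=\cos\psi_{i,-j,N/2-v}$, i.e.\ $-\mu_i\mu_j+\sqrt{(1-\mu_i^2)(1-\mu_j^2)}\cos\phi_v=-\mu_i\mu_j-\sqrt{(1-\mu_i^2)(1-\mu_j^2)}\cos\phi_v$, which fails whenever $\cos\phi_v\neq 0$. What the proposition actually requires is the reflected configuration at azimuth index $N/2-v$,
\begin{equation*}
P(z;(\mu_{-j},0){\rightarrow}(\mu_i,\phi_{\frac{N}{2}-v}))=R(-\alpha_{i,j,v})\,M_{i,-j,\frac{N}{2}-v}\,R(\alpha'_{i,j,v}),
\end{equation*}
which is at least self-consistent (that configuration has scattering-matrix factor $M_{i,-j,\frac{N}{2}-v}$ by definition, so the content concerns only the rotation angles); substituting $w=N/2-v$ into the Fourier sum then gives the weight $e(-\ell(N/2-v)/N)=e(\ell v/N-\ell/2)$, and the half-period phase appears with no further work. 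You land at index $N-v$ instead because you compose the literal map of Lemma \ref{lem:coords-flip}, $(\phi,\phi')\mapsto(\pi-\phi,\pi-\phi')$, with the azimuthal shift normalising the incident azimuth back to $0$: that composite sends the azimuth \emph{difference} to $-\phi_v$, whose cosine is unchanged, and by \eqref{eq:scattering-angle} this is incompatible with the lemma's conclusion $\cos\psi\mapsto\cos(\pi-\psi)$ once $\mu'\mapsto-\mu'$; realising $\cos(\pi-\psi)$ forces the azimuth difference to go to $\pi-\phi_v$, i.e.\ $v\mapsto N/2-v$. Reading the reflection that way — which is what the paper's one-line proof implicitly does — the remainder of your argument goes through as written.
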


\begin{proof}
To the definition of $\hat{P}^{+}_{i,j}$ (see \S \ref{sec:deriving-spectral} and \eqref{eq:pm-notation}) we apply Lemma \ref{lem:coords-flip} and factorise the matrix $R(\alpha_{i, j,v}')$ from the Fourier transform.
\end{proof}

\begin{rem}
Working with the Fourier transform here introduces a technical `full stop' in terms of exploiting the block structure of $R(\alpha)MR(\alpha')$. For instance, should one wish to compute the eigenvalues of the matrix $P(z;(\mu',\phi'){\rightarrow}(\mu,\phi))$ then one could conjugate by a permutation matrix; for example,
\begin{equation*}
\begin{bmatrix}
	&&1			\\
	1&&			\\
	&\bm{1}_2&	
\end{bmatrix}
\begin{bmatrix}
	1&&		\\
	&R(\alpha)&		\\
	&&1
\end{bmatrix}
\begin{bmatrix}
	&1&			\\
	&&\bm{1}_2	\\
	1&&	
\end{bmatrix}
=
\begin{bmatrix}
	1&&		\\
	&1&		\\
	&&R(\alpha)
\end{bmatrix}
\end{equation*}
would permit one to instead consider two $2\times 2$ blocks to determine the eigenvalues of $P(z;(\mu',\phi'){\rightarrow}(\mu,\phi))$. However, the benefit in doing so is outweighed by the sacrifice of the decoupling of the azimuthal variable.
\end{rem}

\subsubsection{Part-diagonal type scattering symmetries}

We now assume that scattering matrices $M=\diag(M_1,M_2)$ are of part-diagonal type: without loss in generality, let us assume $M_2$ is a diagonal matrix. Then, each phase scattering matrix decomposes into a $3\times 3$ block and a $1\times 1$ block along the diagonal. In particular we have
\begin{equation*}
(\tau - \rho)(\tau + \rho) 
=\begin{bmatrix}
	A_{ij}	&						\\
				&	\alpha_{ij}
\end{bmatrix}_{1\leq i,j\leq M}
\end{equation*}
where $A_{ij}$ are $3\times 3$ matrices and $\alpha_{ij}$ are scalar coefficients. Consider the matrix $3M\times 3M$ matrix
\begin{equation*}
A:=
\begin{bmatrix}
A_{ij}
\end{bmatrix}_{1\leq i,j\leq M}
\end{equation*}
and the $M\times M$ matrix
\begin{equation*}
\alpha:=
\begin{bmatrix}
\alpha_{ij}
\end{bmatrix}_{1\leq i,j\leq M}
.
\end{equation*}

\begin{prop}
\label{prop:eigenvalue-A-alpha}
The eigenvalues of $(\tau - \rho)(\tau + \rho)$ are given by the eigenvalues of $\diag(A,\alpha)$ which are equal precisely to the eigenvalues of $A$ and of $\alpha$ respectively.
\end{prop}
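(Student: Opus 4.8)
The plan is to read the displayed block form of $(\tau-\rho)(\tau+\rho)$ as a pair of \emph{decoupled} subsystems — one of size $3M$ assembled from the $3\times 3$ blocks $A_{ij}$, and one of size $M$ assembled from the scalars $\alpha_{ij}$ — and then to make that decoupling manifest as an honest block-diagonal matrix by a single permutation similarity. Once this is done, both assertions of the proposition reduce to standard facts about similar and block-diagonal matrices.

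First I would make the reindexing explicit. Writing the standard basis of $\C^{4M}$ as $(\eb_{i,a})_{1\le i\le M,\,1\le a\le 4}$, the hypothesis that $M_2$ is diagonal forces each $4\times 4$ block of $(\tau-\rho)(\tau+\rho)$ to be block diagonal with respect to the splitting $\{1,2,3\}\sqcup\{4\}$; equivalently, the subspaces $V':=\Span\{\eb_{i,a}:a\in\{1,2,3\}\}$ and $V'':=\Span\{\eb_{i,4}:1\le i\le M\}$ are each invariant under $(\tau-\rho)(\tau+\rho)$ and together span $\C^{4M}$. That this invariance survives the whole construction follows from the fact that $R(\gamma)$ fixes $\eb_1$ and $\eb_4$ and that a diagonal $M_2$ leaves $\C\eb_4$ invariant, so the chain $P=R(\alpha)MR(\alpha')\leadsto\hat P^{\pm}\leadsto\tau\pm\rho\leadsto(\tau-\rho)(\tau+\rho)$ preserves the splitting uniformly in $i,j$ and in the Fourier index $v$; this is exactly the content recorded in the displayed block form. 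Let $\Pi$ be the permutation matrix that relists the basis as all $\eb_{i,a}$ with $a\in\{1,2,3\}$ first (lexicographically in $(i,a)$) followed by all $\eb_{i,4}$. Reading off the action of $(\tau-\rho)(\tau+\rho)$ on $V'$ and on $V''$ in these reordered bases then yields
\[
\Pi\,(\tau-\rho)(\tau+\rho)\,\Pi^{-1}=\diag(A,\alpha),\qquad A=[A_{ij}]_{1\le i,j\le M},\quad \alpha=[\alpha_{ij}]_{1\le i,j\le M}.
\]

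Since $\Pi$ is invertible, $(\tau-\rho)(\tau+\rho)$ and $\diag(A,\alpha)$ are similar and hence share a characteristic polynomial, which gives the first assertion. For the second assertion I would invoke the multiplicativity of the determinant on block-diagonal matrices,
\[
\det\!\big(\diag(A,\alpha)-\lambda\bm{1}_{4M}\big)=\det(A-\lambda\bm{1}_{3M})\,\det(\alpha-\lambda\bm{1}_{M}),
\]
so that the spectrum of $\diag(A,\alpha)$, counted with multiplicity, is precisely the union of the spectra of $A$ and of $\alpha$.

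There is no genuine analytic obstacle here: the proposition is a linear-algebra bookkeeping statement once the part-diagonal structure is in hand. The one place where care is needed is the permutation itself — one must verify that the \emph{same} coordinate $a=4$ decouples across every $(i,j)$ position, so that a single global $\Pi$ block-diagonalises the whole $4M\times 4M$ matrix rather than merely each $4\times 4$ block. This uniformity is guaranteed because the splitting $\{1,2,3\}\sqcup\{4\}$ is dictated by the fixed shape of $R(\gamma)$ together with the diagonal $M_2$, and is therefore independent of $i$, $j$ and $v$. Correct indexing of $\Pi$, rather than any deeper difficulty, is the only point at which an error could plausibly enter.
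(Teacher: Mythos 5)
Your proof is correct, and it rests on the same core idea as the paper's: reorder the coordinates so that $(\tau-\rho)(\tau+\rho)$ becomes $\diag(A,\alpha)$, then factor the characteristic polynomial of the block-diagonal matrix. The execution differs in one respect worth noting. The paper manipulates the determinant $\det(x\bm{1}_{4M}-(\tau-\rho)(\tau+\rho))$ by permuting rows only, tracking the resulting sign $(-1)^{3M(M-1)/2}$; strictly speaking, a row permutation alone does not bring $x\bm{1}_{4M}-(\tau-\rho)(\tau+\rho)$ into the form $\diag(x\bm{1}_{3M}-A,\,x\bm{1}_{M}-\alpha)$ --- the columns must be permuted by the same permutation, and once both are permuted the two signs cancel, so the $\pm$ is in fact $+1$ (harmless either way, since an overall nonzero constant does not move the zeros of the polynomial). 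Your formulation as a permutation similarity $\Pi\,(\tau-\rho)(\tau+\rho)\,\Pi^{-1}=\diag(A,\alpha)$ packages the row and column permutations together, so the characteristic polynomials agree exactly and no sign bookkeeping arises; this is the cleaner route, and it is what the paper's argument amounts to once made precise. Your other addition --- verifying that the $\{1,2,3\}\sqcup\{4\}$ splitting is the \emph{same} across all $i$, $j$ and Fourier indices $v$, because $R(\gamma)$ fixes $\eb_1$ and $\eb_4$, a diagonal $M_2$ preserves $\C\eb_4$, and this block structure survives sums and products --- is taken for granted in the paper (it is precisely what the displayed block form of $(\tau-\rho)(\tau+\rho)$ asserts), and flagging it as the one point requiring uniformity is the right instinct.
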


\begin{proof}
The characteristic polynomial of $(\tau - \rho)(\tau + \rho)$ is given by
\begin{equation*}
\begin{array}{r c >{\ds} l}
\vspace{0.15in}
\det(x\bm{1}_{4M}-(\tau - \rho)(\tau + \rho))&=&\det\mat{\delta_{ij}x\bm{1}_{3}-A_{ij}}{}{}{\delta_{ij}x-\alpha_{ij}}_{1\leq i,j \leq M}
\\\vspace{0.15in}
&=&\pm\det\mat{x\bm{1}_{3M}-A}{}{}{x\bm{1}_{M}-\alpha}
\\
&=&\pm\det(x\bm{1}_{3M}-A)\det(x\bm{1}_{M}-\alpha)
\end{array}
\end{equation*}
where the first equality is obtained by permuting pairs of adjacent rows so that row $4(M-k+1)$ is moved down in order to row $4M-k+1$ for $k=1,\ldots,M$, incurring a factor of $-1$ for each permutation. The zeros of the left-hand side are equal, with multiplicity, to the zeros of the right-hand side. These are precisely the eigenvalues in question. (Note that the sign $\pm$ is equal to $\prod_{k=1}^{M}(-1)^{3(k-1)}=(-1)^{3M(M-1)/2}$ which is determined by the parity of $M(M-1)/2$, which is even if $M\equiv 0$ or $1$ modulo $4$ and odd if $M\equiv 2$ or $3$ modulo $4$.)
\end{proof}

\begin{prop}
\label{prop:eigenvector-A-alpha}
Let $v_A$ and $v_\alpha$ denote right eigenvectors of $A$ and $\alpha$, respectively, and let $\bm{0}_n$ denote the zero column vector of dimension $n\geq 1$.
Then the $4M$-dimensional vectors $(v_A, \bm{0}_M)^{T}$ and $(\bm{0}_{3M}, v_\alpha)^{T}$ are both right eigenvectors of $\diag(A,\alpha)$.
Moreover, the eigenvectors of $(\tau - \rho)(\tau + \rho)$ are given by the eigenvectors of $\diag(A,\alpha)$ after permuting their rows as in the proof of Proposition \ref{prop:eigenvalue-A-alpha}.
\end{prop}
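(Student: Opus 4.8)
The plan is to separate the two assertions. The first---that $(v_A,\bm{0}_M)^{T}$ and $(\bm{0}_{3M},v_\alpha)^{T}$ are eigenvectors of $\diag(A,\alpha)$---is a one-line computation for block-diagonal matrices, while the second---that the eigenvectors of $(\tau-\rho)(\tau+\rho)$ are obtained by reindexing those of $\diag(A,\alpha)$---requires upgrading the determinant rearrangement of Proposition~\ref{prop:eigenvalue-A-alpha} to a genuine similarity.

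First I would verify the easy direction. If $A v_A = \lambda v_A$, then directly from the block-diagonal form,
\[
\diag(A,\alpha)\begin{bmatrix} v_A\\ \bm{0}_M\end{bmatrix}=\begin{bmatrix} A v_A\\ \alpha\,\bm{0}_M\end{bmatrix}=\begin{bmatrix}\lambda v_A\\ \bm{0}_M\end{bmatrix}=\lambda\begin{bmatrix} v_A\\ \bm{0}_M\end{bmatrix},
\]
and symmetrically $(\bm{0}_{3M},v_\alpha)^{T}$ is an eigenvector whenever $\alpha v_\alpha=\mu v_\alpha$. Because the eigenvalues $\{d_i^{2}\}_{1\le i\le 4M}$ are pairwise distinct (recall $0<d_1<\cdots<d_{4M}$), each eigenspace of $\diag(A,\alpha)$ is one-dimensional, so these padded vectors already furnish a complete eigenbasis; in particular every eigenvector of $\diag(A,\alpha)$ has one of the two stated forms.

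For the second assertion I would introduce the permutation matrix $\Pi$ realising the reindexing of Proposition~\ref{prop:eigenvalue-A-alpha}: for each $4\times4$ diagonal block it collects the three ``$A$-type'' coordinates into a leading group of total size $3M$ and the single ``$\alpha$-type'' coordinate into a trailing group of size $M$. The crucial input of the part-diagonal hypothesis is that every $4\times4$ entry of $(\tau-\rho)(\tau+\rho)$ has the decoupled form $\diag(A_{ij},\alpha_{ij})$, so that applying $\Pi$ \emph{simultaneously to rows and columns} separates the two groups exactly and yields the similarity
\[
(\tau-\rho)(\tau+\rho)=\Pi^{-1}\,\diag(A,\alpha)\,\Pi.
\]
Whereas the proof of Proposition~\ref{prop:eigenvalue-A-alpha} only tracked the row permutation (producing the residual sign $(-1)^{3M(M-1)/2}$), here it is essential that $\Pi$ acts on both rows and columns, making the transformation a conjugation for which that sign is irrelevant.

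The conclusion is then formal: if $\diag(A,\alpha)\,w=\lambda w$ then
\[
(\tau-\rho)(\tau+\rho)\,\Pi^{-1}w=\Pi^{-1}\diag(A,\alpha)\,\Pi\,\Pi^{-1}w=\lambda\,\Pi^{-1}w,
\]
so $\Pi^{-1}w$---which is exactly $w$ with its rows permuted back to the interleaved order---is an eigenvector of $(\tau-\rho)(\tau+\rho)$ with the same eigenvalue, and invertibility of $\Pi$ makes the correspondence a bijection on eigenvectors. I expect the only real obstacle to be bookkeeping: writing $\Pi$ explicitly, checking it is the inverse of the row operation used in Proposition~\ref{prop:eigenvalue-A-alpha}, and confirming that the part-diagonal structure leaves no residual coupling between the $A$- and $\alpha$-blocks after reindexing. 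Everything else is routine.
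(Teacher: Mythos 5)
Your proposal is correct, and it is in fact more complete than what the paper offers: the paper states Proposition~\ref{prop:eigenvector-A-alpha} with no proof at all, treating it as immediate from the block structure and the row rearrangement used for Proposition~\ref{prop:eigenvalue-A-alpha}. Your two additions are exactly the right ones. First, the completeness step (distinct eigenvalues $d_1^2<\cdots<d_{4M}^2$ force one-dimensional eigenspaces, so the padded vectors $(v_A,\bm{0}_M)^{T}$ and $(\bm{0}_{3M},v_\alpha)^{T}$ exhaust all eigenvectors) is needed to make the word ``given by'' in the statement literally true, and it correctly traces back to Assumption~\ref{ass:lin-indep-evals} via the ordering established in \S\ref{sec:hemi-sym}. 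Second, and more importantly, you identify that transporting \emph{eigenvectors} requires the simultaneous row-and-column permutation
\begin{equation*}
(\tau-\rho)(\tau+\rho)=\Pi^{-1}\,\diag(A,\alpha)\,\Pi,
\end{equation*}
i.e.\ a genuine similarity, whereas the paper's proof of Proposition~\ref{prop:eigenvalue-A-alpha} only tracks a row permutation. This is not a cosmetic point: a row permutation alone does not produce the matrix $\diag(x\bm{1}_{3M}-A,\,x\bm{1}_{M}-\alpha)$ (the columns remain interleaved), and your conjugation argument both repairs that gap and shows, as you note, that the residual sign $(-1)^{3M(M-1)/2}$ in the paper is an artifact --- similar matrices have \emph{equal} characteristic polynomials, so the sign is $+1$. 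The only bookkeeping left implicit in your write-up, that $\Pi$ induces no coupling between the $A$- and $\alpha$-blocks, is precisely the part-diagonal hypothesis $[(\tau-\rho)(\tau+\rho)]_{ij}=\diag(A_{ij},\alpha_{ij})$, which you invoke correctly.
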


\begin{rem}
In the part-diagonal type case
we have reduced the $8M$-dimensional eigenvalue problem firstly, as in \S \ref{sec:hemi-sym}, to obtain a $4M$-dimensional problem. Then, via Propositions \ref{prop:eigenvalue-A-alpha} and \ref{prop:eigenvector-A-alpha}, to two eigenvalue problems of respective dimension $3M$ and $M$.
The computational time taken to solve such an eigenvalue problem is roughly proportional to the cube of the dimension. We thus save a factor of $$\frac{512M^3}{27M^3+M^3}=\frac{128}{7}> 18.$$
\end{rem}

%\begin{lem}
%\label{lem:eig-AB}
%For $n\times n$ matrices $A$ and $B$ we have that $\lambda$ is an eigenvalue of $AB$ if and only if it is an eigenvalue of $BA$.
%\end{lem}

%\begin{proof}
%Let $v$ denote the eigenvector of $AB$ such that $ABv=\lambda v$ and write $u= Bv$. Then
%$BAu = BABv=\lambda Bv = \lambda y.$
%\end{proof}

%\begin{cor}
%\label{cor:eig-AB}
%Let $\{v\}$ denote the set of eigenvectors of $AB$. Then the eigenvectors of $BA$ are given by $\{Bv\}$.
%\end{cor}

%\begin{lem}
%\label{lem:eig-Ai}
%For $i=1,\ldots, r$ let $A_{i}$ be an $n_i\times n_i$ matrix with an eigenvalues $\lambda_i$ and eigenvector $v_i$ such that $A_i v_i = A_i \lambda_i$. Then $\lambda_i$ is an eigenvalue of the block diagonal matrix $\diag(A_1,\ldots, A_r)$. The corresponding eigenvectors are found by extending each $v_i$ appropriately by zeros to have length $n=\sum_{i=1}^{r}n_i$.
%\end{lem}

%\begin{proof}
%The eigenvalues of $A=\diag(A_1,\ldots, A_r)$ are classified as being the zeros of the characteristic polynomial
%\begin{equation*}
%\det(x\bm{1}_{n}-A)=\prod_{i=1}^{r}(x\bm{1}_{n_i}-A_i).
%\end{equation*}
%Considering the zeros of each side we deduce the first assertion. The second follows by direct construction. 
%\end{proof}

\section{Infinitely deep inhomogeneous water bodies}
\label{sec:infinite}

For polarised light travelling through a natural plane-parallel water body, boundary conditions are imposed at the surface and at the bottom, constituting  a two-point boundary value problem. At the surface, the reflectance and transmission functions for polarised radiation are considered by \cite{mobley-15}.
At the bottom, there are two separate boundaries to consider:
\begin{itemize}
\itemspacing

\item[($1$)] A finitely deep opaque Lambertian surface.

\item[($\infty$)] An infinitely deep opaque bottom with a homogeneous, source-free layer separating it from the main body.

\end{itemize}

The finite case ($1$) is straightforward to derive \citep[see][\S 1.2.6.1]{mobley-VRTE}. In this case we assume that light scatters diffusely according to Lambert's cosine emission law \citep{lambert}. For example, this is the case for both muddy bottoms and sandy bottoms. Moreover, it is understood that such surfaces are depolarising; the only surviving component of the Stokes vector is that determining the radiance.

Here we derive the bottom boundary condition in case ($\infty$) by computing the `Vector Bidirectional Reflectance Distribution Function' (VBRDF) which determines the polarised radiation reflected back upwards from a finite depth, past which the water is considered homogeneous.

\subsection{The landscape}

Fix two depths $z_1> z_0 >0$ and assume that in the interval the $[z_0,z_1]$ the water body is homogeneous and source-free; recall that this means $K$ is a constant function of $z$ and $\sigma=0$ (see Assumption \ref{ass:homog}). This then describes case ($\infty$) above upon taking the limit $z_1\rightarrow\infty$. Above $z_0$ we make no additional assumptions so that $z_0$ may be seen as the bottom of a general inhomogeneous problem. The information we input is the down-welling radiance distribution $S(z_0;\mu',\phi')$ for $\mu\in [0,1]$ and $\phi\in[0,2\pi)$.

\begin{ass}
\label{ass:reflectance-1}
All up-welling ($\mu<0$) at $z_0$ is due to the reflection of a down-welling incident Stokes vector $S(z_0;\mu',\phi')$ with $\mu'\geq 0$; that is, we impose that $S(z_0;\mu',\phi')=0_{M}$ for all $\mu'<0$.
\end{ass}
For natural water bodies, this assumption describes both the finite ($1$) and infinite ($\infty$) cases above.
We now derive the reflectance $r(z_0,z_1; (\mu',\phi'){\rightarrow}(\mu,\phi))$ at $z_0$ for the homogeneous body $[z_0,z_1]$. By Assumption \ref{ass:reflectance-1}, the boundary condition at $z_0$ may then be written explicitly as
\begin{equation}
\label{eq:boundary-1}
S(z_0; -\mu, \phi) = \int_{0}^{2\pi} \int_{0}^{1} r(z_0,z_1; (\mu',\phi'){\rightarrow}(-\mu,\phi)) S(z_0;\mu',\phi')d\mu'd\phi
\end{equation}
for all $\mu\in [0,1]$ and $\phi\in[0,2\pi)$.
\begin{ass}
\label{ass:reflectance-2}
The reflectance of the water in $[z_0,z_1]$ depends only on the difference between the resultant and incident azimuthal directions $\phi-\phi'$. That is, $$r(z_0,z_1; (\mu',\phi'){\rightarrow}(\mu,\phi))=r(z_0,z_1; (\mu',0){\rightarrow}(\mu,\phi-\phi')).$$
\end{ass}
In the infinitely deep boundary case ($\infty$), this assumption holds given the plane-parallel assumption on the water body itself. (In fact, this holds for the finite boundary cases ($1$) without an imposing asymmetrical structure.)

\subsection{Discrete spectral boundary condition}

In parallel to \S \ref{sec:deriving-spectral}, we derive a discrete spectral analogue for the boundary condition \eqref{eq:boundary-1} to serve as a boundary condition to the discrete spectral VRTE. We apply similarly apply Assumptions \ref{ass:step-az} and \ref{ass:step-po} to $r$, that it is a step function (or quad averaged) in its $\Sf$-variables. Then applying the finite Fourier transform \eqref{eq:fourier-transform} to $r$ we obtain
\begin{equation}
\label{eq:fourier-transform-r}
\hat{r}(z_0,z_1;\mu,\mu';\ell) := \sum_{v=0}^{N-1}r(z_0,z_1;(\mu',0){\rightarrow}(-\mu,\phi_{v}))e(-\ell v/N).
\end{equation}
Substituting 
\eqref{eq:fourier-transform-S}
and
\eqref{eq:fourier-transform-r}
into
\eqref{eq:boundary-1}
via the inversion formula \eqref{eq:fourier-inversion}; evaluating the $\phi'$ integral via
Assumption \ref{ass:reflectance-2}; and equating the $\ell$-th Fourier coefficients using Lemma \ref{lem:equating-coeffs}
we obtain the $4$-dimensional linear equations
\begin{equation}
\label{eq:boundary-2}
\hat{S}^{-}(z_0; \mu_i; \ell) =\frac{2\pi}{N} \sum_{j=1}^{M}\Delta\mu_j \hat{r}(z_0,z_1; \mu_{i},\mu_{j};\ell)\hat{S}^{+}(z_0; \mu_j; \ell)
\end{equation}
for each $1\leq i\leq M$ and $0\leq \ell\leq N-1$. Invoking the matrix notation
\begin{equation*}
\hat{r}(z_0,z_1; \ell):=\left[\hat{r}(z_0,z_1; \mu_{i},\mu_{j};\ell)\right]_{1\leq i,j\leq M},
\end{equation*}
we summarise \eqref{eq:boundary-2} by the single equality
\begin{equation}
\label{eq:boundary-3}
\hat{S}^{-}(z_0; \ell) =\frac{2\pi}{N} \diag(\Delta\mu_1,\ldots,\Delta\mu_{M}) \hat{r}(z_0,z_1; \ell)\hat{S}^{+}(z_0; \ell).
\end{equation}
We now derive a solution for $\hat{r}(z_0,z_1; \ell)$ so that this boundary condition may be utilised.

\subsection{An analytic solution for the reflectance}
\label{sec:analytic-solution-reflectance}

The fundamental assumption of radiative transfer \citep{mobley-LAW} is the (global) \textit{linear interaction principle}: the response radiances emitted from an interval of water $[z_0,z_1]$ depend linearly on the incident radiances on $[z_0,z_1]$. This predicts the existence of four global transfer matrices $T(z_0,z_1; \ell)$, $T(z_1,z_0; \ell)$, $R(z_0,z_1; \ell)$, and $R(z_1,z_0; \ell)$ that satisfy
\begin{equation}
\label{eq:global-interaction-principle}
\begin{bmatrix}
\hat{S}^{+}(z_1; \ell)\\
\hat{S}^{-}(z_0; \ell)
\end{bmatrix}
=
\begin{bmatrix}
T(z_0,z_1; \ell)&R(z_1,z_0; \ell)\\
R(z_0,z_1; \ell)&T(z_1,z_0; \ell)
\end{bmatrix}
\begin{bmatrix}
\hat{S}^{+}(z_0; \ell)\\
\hat{S}^{-}(z_1; \ell)
\end{bmatrix}
\end{equation}
for each $0\leq \ell \leq N-1$. The variable order of the transfer matrices gives the location of the incident radiation in the first argument.

The successful invariant imbedding method \citep{preisendorfer, mobley-LAW} is derived by writing \eqref{eq:global-interaction-principle} in terms of a postulated state-transition matrix $\Phi(z_1,z_0)$, as introduced in \S \ref{sec:fundamental-sol}. The outcome is to rewrite the two-point boundary value problem as two initial value problems, knitted at the seam via the global interaction principle.

This shall too be the goal at present, albeit in a vastly simplified setting. Firstly, by Assumption \ref{ass:reflectance-1}, there is no incident radiance at $z_1$. This meaning that $S^{-}(z_1; \ell)=0_{4M}$. Moreover, $\hat{S}^{+}(z_1; \ell)$ is of no concern to us. Then \eqref{eq:global-interaction-principle} reduces to the boundary condition \eqref{eq:boundary-3} upon writing
\begin{equation}
\label{eq:R-as-r}
R(z_0,z_1; \ell) = \frac{2\pi}{N} \diag(\Delta\mu_1,\ldots,\Delta\mu_{M}) \hat{r}(z_0,z_1; \ell).
\end{equation}
On the other hand, assuming that the water in $[z_0,z_1]$ is source free by \eqref{eq:sol-general-nonhomog} we have that $\hat{S}(z_1)=\Phi(z_1,z_0)\hat{S}(z_0)$. Additionally assuming that the water in $[z_0,z_1]$ is homogeneous then
\begin{equation}
\label{eq:Phi-blocks}
\Phi(z_1,z_0)=
\mat{E^{+}}{E^{-}}{E^{-}}{E^{+}}
\mat{\exp(D^{+}(z-z_0))}{}{}{\exp(-D^{+}(z-z_0))}
\mat{{E}^{+}}{{E}^{-}}{{E}^{-}}{{E}^{+}}^{-1}
\end{equation}
as derived in \S \ref{sec:diagonalising} with the block structure refined in \S \ref{sec:hemi-sym}. To solve for $R(z_0,z_1; \ell)$ in terms of this decomposition of $\Phi(z_1,z_0)$ into the eigenvectors and eigenvalues of $K$, let us introduce the arbitrary notation
$$\Phi(z_1,z_0) = \begin{bmatrix}
\Phi^{++}&\Phi^{-+}\\
\Phi^{+-}&\Phi^{--}
\end{bmatrix}.$$
Solving $\hat{S}(z_1)=\Phi(z_1,z_0)\hat{S}(z_0)$ instead in terms of the response radiances $\hat{S}^{+}(z_1; \ell)$ and
$\hat{S}^{-}(z_1; \ell)$ we obtain
\begin{equation}
\label{eq:invariant-imbedding-reln}
\begin{bmatrix}
\hat{S}^{+}(z_1; \ell)\\
\hat{S}^{-}(z_0; \ell)
\end{bmatrix}
=
\begin{bmatrix}
\Phi^{++}-\Phi^{-+}(\Phi^{--})^{-1}\Phi^{+-} &\Phi^{-+}(\Phi^{--})^{-1}\\
-(\Phi^{--})^{-1}\Phi^{+-} &(\Phi^{--})^{-1}
\end{bmatrix}
\begin{bmatrix}
\hat{S}^{+}(z_0; \ell)\\
\hat{S}^{-}(z_1; \ell)
\end{bmatrix}.
\end{equation}
This equation determines a general \textit{invariant imbedding relation} in more generality than required here. Recall that by Assumption \ref{ass:reflectance-1} we consider $\hat{S}^{-}(z_1; \ell)=0$. Then \eqref{eq:invariant-imbedding-reln} recovers the boundary condition \eqref{eq:boundary-3}, substituting \eqref{eq:R-as-r}, if and only if
\begin{equation*}
R(z_0,z_1; \ell) =- (\Phi^{--})^{-1}\Phi^{+-}.
\end{equation*}
Above, we have assumed that $\Phi^{--}$ is an invertible $4M\times 4M$ block. As we shall imminently see, this follows from that $E^{\pm}$ and $\tilde{E}^{\pm}$ are invertible, which in turn is due to the linear independence of the eigenvectors they contain (see Assumption \ref{ass:lin-indep-evals}). 
Multiplying the components of \eqref{eq:Phi-blocks} and applying the expression for $E^{-1}$ in \eqref{eq:E-inverse} we evaluate $\Phi^{--}$ and $\Phi^{+-}$, thus obtaining the general result for the reflectance at $z_0$ for the interval $[z_0,z_1]$:
\begin{multline}
\label{eq:reflectence-z0-z1}
R(z_0,z_1; \ell) = -
(E^{-}\exp(D^{+}(z_1-z_0))\tilde{E}^{-}+E^{+}\exp(-D^{+}(z_1-z_0))\tilde{E}^{+})^{-1} \\
\times (E^{-}\exp(D^{+}(z_1-z_0))\tilde{E}^{+}+E^{+}\exp(-D^{+}(z_1-z_0))\tilde{E}^{-} ),
\end{multline}
introducing the shorthand $\tilde{E}^{\pm}:=(E^{\pm}-E^{\mp}(E^{\pm})^{-1}E^{\mp})^{-1}$ for the blocks of $E^{-1}$. To evaluate this expression in an infinitely deep water body we must take the limit $R(z_0,\infty; \ell):=\lim_{z_1\rightarrow \infty}R(z_0,z_1; \ell)$. We execute this by looking for the exponential decay of $\exp(-D^{+}z_1)$. We thus rearrange \eqref{eq:reflectence-z0-z1} by inserting the term $1=E^{-}\exp(D^{+}(z_1-z_0))\exp(-D^{+}(z_1-z_0))(E^{-})^{-1}$ in between the two factors to obtain
\begin{multline*}
R(z_0,z_1; \ell) = -
(\tilde{E}^{-}+\exp(-D^{+}(z_1-z_0))(E^{-})^{-1}E^{+}\exp(-D^{+}(z_1-z_0))\tilde{E}^{+})^{-1} \\
\times (\tilde{E}^{+}+\exp(-D^{+}(z_1-z_0))(E^{-})^{-1}E^{+}\exp(-D^{+}(z_1-z_0))\tilde{E}^{-} ).
\end{multline*}
Taking the limit $z_1\rightarrow\infty$ we lastly recover the reflectance of $[z_0,\infty)$ for a homogeneous water body 
\begin{equation*}
\begin{array}{rl}\vspace{0.1in}
R(z_0,\infty; \ell)
=&-(\tilde{E}^{-})^{-1}\tilde{E}^{+}\\\vspace{0.1in}
=& -(E^{-}-E^{+}(E^{-})^{-1}E^{+})
(E^{+}-E^{-}(E^{+})^{-1}E^{-})^{-1}\\\vspace{0.1in}
=&(E^{+}(E^{-})^{-1}-E^{-}(E^{+})^{-1})E^{+}(E^{-})^{-1}(E^{+}(E^{-})^{-1}-E^{-}(E^{+})^{-1})^{-1}\\
=&E^{+}(E^{-})^{-1}.
\end{array}
\end{equation*}

\section{Asymptotic radiance distributions}
\label{sec:asymp}

As noted in \citep[\S 9.6]{mobley-LAW}, experimental observation shows that the scalar radiance distribution depends only on the inherent optical properties in deep, homogeneous source-free waters. In this case, the decay rate of the radiance varies exponentially with depth.
We conclude here that the same holds for each component of the Stokes vector.

As before, consider an infinitely deep homogeneous, source-free interval $[z_0,\infty)$ in the water body. Assume we know the initial surface condition $\hat{S}(z_0)$. In this interval, propagation of the Stokes vector is described by $\hat{S}(z)=\Phi(z,z_0)\hat{S}(z)$ for $z> z_0$ with $\Phi$ as in \eqref{eq:phase-matrix}. Expanding the expression with $E$ as in \S \ref{sec:hemi-sym} we obtain
\begin{equation*}
\begin{bmatrix}
\hat{S}^{+}(z;\ell)\\
\hat{S}^{-}(z;\ell)
\end{bmatrix}
=
\begin{bmatrix}
E^{+}\exp(D^{+}(z-z_0))I^{+} +E^{-}\exp(-D^{+}(z-z_0))I^{-}\\
E^{-}\exp(D^{+}(z-z_0))I^{+} +E^{+}\exp(-D^{+}(z-z_0))I^{-}
\end{bmatrix}
\end{equation*}
where we have defined
\begin{equation}
\label{eq:Ipm}
I^{\pm} := (E^{+}-E^{-}(E^{+})^{-1}E^{-})^{-1}\hat{S}^{\pm}(z_0;\ell) +  (E^{-}-E^{+}(E^{-})^{-1}E^{+})^{-1}\hat{S}^{\mp}(z_0;\ell).
\end{equation}
The main result of the last section stated that $R(z_0,\infty; \ell)=E^{+}(E^{-})^{-1}$. Moreover, the boundary condition \eqref{eq:boundary-3} implies $\hat{S}^{-}(z_0;\ell)=R(z_0,\infty; \ell)\hat{S}^{+}(z_0;\ell)$. Substituting these into \eqref{eq:Ipm} we find that $$I^{+}=0$$ and $$I^{-} = (E^{-}-E^{+}(E^{-})^{-1}E^{+})(\bm{1}_{4M} - R(z_0,\infty; \ell)^{2})\hat{S}^{+}(z_0;\ell),$$ the latter of which does not vanish in general. The result that we have now derived is the asymptotic distribution
\begin{multline}
\label{eq:asymptotic distribution}
\hat{S}^{\pm}(z;\ell) = E^{\mp} \exp(-D^{+}(z-z_0))\\
\times(E^{-}-E^{+}(E^{-})^{-1}E^{+})(\bm{1}_{4M} - R(z_0,\infty; \ell)^{2})\hat{S}^{+}(z_0;\ell)
\end{multline}
as $z\rightarrow\infty$. This equations exhibits the exponential decay of the Stokes vector transform in each variable. One may explicate further by considering the dominant term in \eqref{eq:asymptotic distribution}. This is given by $ \exp(-d_{1}z)$ since $d_{1}<d_{i}$ for all eigenvalues $i\geq 1$. Taking the Fourier expansion \eqref{eq:E-inverse} we similarly obtain a leading exponential term by which the full Stokes vector decays. The polarised Stokes vector components decay exactly as the scalar radiance does in the limit $z\rightarrow\infty$.

\section*{Acknowledgments}

This work was supported by the UK Engineering and Physical Sciences Research Council Prosperity Partnership, TEAM-A (EP/R004781/1).

%----------------------------------------------------------------------------------------
%	BIBLIOGRAPHY
%----------------------------------------------------------------------------------------

%\bibliographystyle{amsplain}			% stylechoice: amsplain or amsalpha
\bibliography{refs-overSee}			% calls refs-deepSee.bib

\end{document}